\newcommand{\true}[0]{\mathit{true}}
\newcommand{\false}[0]{\mathit{false}}
\newcommand{\mchyper}[0]{\textsc{MCHyper}\@\xspace}
\newcommand{\aiger}[0]{\textsc{Aiger}\@\xspace}
\newcommand{\abc}[0]{\textsc{ABC}\@\xspace}
\newcommand{\ldot}{\mathpunct{.}}
\newcommand{\U}{\LTLuntil}
\newcommand{\X}{\LTLnext}
\newcommand{\G}{\LTLglobally}
\newcommand{\F}{\LTLeventually}
\newcommand{\R}{\LTLrelease}
\newcommand{\Tr}{\mathit{Tr}}
\newcommand{\cex}{\Gamma}
\newcommand{\traces}{\mathit{traces}}
\newcommand{\intervene}{\mathit{intervene}}
\newcommand{\candidates}{\Tilde{\mathcal{C}}}
\newcommand{\cause}{\mathcal{C}}
\newcommand{\contingency}{\mathcal{W}}
\newcommand{\alt}{\mathcal{A}}
\newcommand{\traceassign}{\Pi}
\newcommand{\tracevars}{\mathcal{V}}
\renewcommand{\models}{\vDash}
\newcommand{\nmodels}{\nvDash}
\newcommand{\donotshow}[1]{}
\def\moverlay{\mathpalette\mov@rlay}
\def\mov@rlay#1#2{\leavevmode\vtop{%
		\baselineskip\z@skip \lineskiplimit-\maxdimen
		\ialign{\hfil$\m@th#1##$\hfil\cr#2\crcr}}}
\newcommand{\charfusion}[3][\mathord]{
	#1{\ifx#1\mathop\vphantom{#2}\fi
		\mathpalette\mov@rlay{#2\cr#3}
	}
	\ifx#1\mathop\expandafter\displaylimits\fi}
\newcommand{\cupdot}{\charfusion[\mathbin]{\cup}{\cdot}}
\newcolumntype{H}{>{\setbox0=\hbox\bgroup}c<{\egroup}@{}}
\definecolor{TodoRed}{RGB}{225,63,63}
\newcommand{\ttodo}[4]{\ifthenelse{\equal{#1}{inline}}{\todo[inline, author=#2, color=#3,bordercolor=#3,backgroundcolor=#3!25,linecolor=#3]{#4}}{\todo[color=#3,bordercolor=#3,backgroundcolor=#3!25,linecolor=#3]{#2: #4}}}
\definecolor{trace0}{RGB}{179, 76, 18}
\definecolor{trace1}{RGB}{179, 152, 18}
\definecolor{expl0}{RGB}{141, 63, 144}
\definecolor{expl1}{RGB}{236, 77, 216}
\definecolor{highlight}{RGB}{0, 123, 255}
\newcommand{\aut}[1]{\ensuremath{\mathcal{#1}}}
\newcommand{\violation}[0]{\ensuremath{\Bar{\varphi}}}
\newcommand{\method}[1]{\ensuremath{\textsf{#1}}}
\newcommand{\cont}{{\method{ComputeContingency}}}
\newcommand{\ActualCause}{{\method{ActualCause}}}
\newcommand{\CF}{Counterfactual}
\newcommand{\hypervis}{HyperVis\xspace}
\begin{document}

\title{Explaining Hyperproperty Violations\thanks{This work was funded by DFG grant 389792660 as part of \href{https://perspicuous-computing.science}{TRR~248 -- CPEC}, by the DFG as part of the Germany’s Excellence Strategy EXC 2050/1 - Project ID 390696704  - Cluster of Excellence ``\emph{Centre for Tactile Internet}'' (CeTI) of TU Dresden, by the European Research Council (ERC)\ Grant OSARES (No. 683300), and by the German Israeli Foundation (GIF) Grant No. I-1513-407./2019.}}

\author{Norine Coenen\inst{1} \and Raimund Dachselt\inst{2} \and Bernd Finkbeiner\inst{1} \and Hadar Frenkel\inst{1} \and Christopher Hahn\inst{1} \and Tom Horak\inst{3} \and Niklas Metzger\inst{1} \and Julian Siber\inst{1}}
\institute{CISPA Helmholtz Center for Information Security, Saarbrücken, Germany\\ \email{\{norine.coenen,finkbeiner,hadar.frenkel,christopher.hahn, niklas.metzger,julian.siber\}@cispa.de} \and
	Interactive Media Lab, Technische Universit\"at Dresden, Dresden, Germany\\ \email{dachselt@acm.org} \and
	elevait GmbH \& Co. KG, Dresden, Germany\\
	\email{tom.horak@elevait.de}
}

\authorrunning{N.\ Coenen et al.}

\maketitle
\begin{abstract}
Hyperproperties relate multiple computation traces to each other.
Model checkers for hyperproperties thus return, in case a system model violates the specification, a set of traces as a counterexample. Fixing the erroneous relations between traces in the system that led to the counterexample is a difficult manual effort that highly benefits from additional explanations.
In this paper, we present an explanation method for counterexamples to hyperproperties described in the specification logic HyperLTL.
We extend Halpern and Pearl's definition of actual causality to sets of traces witnessing the violation of a HyperLTL formula, which allows us to identify the events that caused the violation. We report on the implementation of our method and show that it significantly improves on previous approaches for analyzing counterexamples returned by HyperLTL model checkers. 

\end{abstract}

\section{Introduction}

While model checking algorithms and tools~(e.g.,~\cite{DBLP:conf/lop/ClarkeE81,DBLP:journals/fmsd/ClarkeBRZ01,DBLP:conf/cav/DehnertJK017,10.1007/978-3-642-14295-6_5,DBLP:journals/sttt/LarsenPY97,DBLP:journals/tse/Holzmann97}) have, in the past, focused on trace properties,
recent failures in security-critical systems, such as Heartbleed~\cite{durumeric2014matter}, Meltdown~\cite{Lipp2018meltdown}, Spectre~\cite{Kocher2018spectre}, or Log4j~\cite{log4j}, have triggered the development of model checking algorithms for properties that relate multiple computation traces to each other, i.e., \emph{hyperproperties}~\cite{Hyperproperties}.
Although the counterexample returned by such a model checker for hyperproperties, which takes the shape of a \emph{set} of traces, may aid in the debugging process, understanding and narrowing down which features are actually responsible for the erroneous relation between the traces in the counterexample requires significantly more manual effort than for trace properties.
In this paper, we develop an explanation technique for these more complex counterexamples that identifies the \emph{actual causes}~\cite{Halpern15,HalpernPearl05a,HalpernPearl05b} of hyperproperty violations.

Existing hyperproperty model checking approaches (e.g.,~\cite{finkbeiner2018model,DBLP:conf/cav/FinkbeinerRS15,hsu2020bounded}), take a HyperLTL formula as an input. HyperLTL is a temporal logic extending LTL with explicit trace quantification~\cite{DBLP:conf/post/ClarksonFKMRS14}.

For example, observational determinism, which requires that all traces $\pi,\pi'$ agree on their observable outputs $\mathit{lo}$ whenever they agree on their observable inputs $\mathit{li}$, can be formalized in HyperLTL as $\forall \pi. \forall \pi'. \G (\mathit{li}_{\pi} \leftrightarrow \mathit{li}_{\pi'}) \rightarrow \G (\mathit{lo}_{\pi} \leftrightarrow \mathit{lo}_{\pi'}).$
In case a system model violates observational determinism, the model checker consequently returns a set of two execution traces witnessing the violation.

A first attempt in explaining model checking results of HyperLTL specifications has been made with \hypervis~\cite{DBLP:journals/tvcg/HorakCMHFMDFD22}, which visualizes a counterexample returned by the model checker MCHyper~\cite{DBLP:conf/cav/FinkbeinerRS15} in a browser application.
While the visualizations are already useful to analyze the counterexample at hand, it fails to identify causes for the violation in several security-critical scenarios.
 This is because \hypervis identifies important atomic propositions that appear in the HyperLTL formula and highlights these in the trace and the formula. For detecting causes, however, this is insufficient: a cause for a violation of observational determinism, for example, could be a branch on the valuation of a secret input $i_s$, which is not even part of the formula (see Sec.~\ref{sec:running_ex} for a running example).

Defining what constitutes an actual cause for an effect (a violation) in a given scenario is a precious contribution by Halpern and Pearl~\cite{Halpern15,HalpernPearl05a,HalpernPearl05b}, who refined and formalized earlier approaches based on counterfactual reasoning~\cite{Lewis1973}: Causes are sets of events such that, in the counterfactual world where they do not appear, the effect does not occur either. One of the main insights of Halpern and Pearl's work, however, is that naive counterfactuals are too imprecise. If, for instance, our actual cause preempted another potential cause, the mere absence of the actual cause will not be enough to prevent the effect, which will be still produced by the other cause in the new scenario. Halpern and Pearl's definition therefore allows to carefully control for other possible causes through the notion of \emph{contingencies}. In the modified definition~\cite{Halpern15}, contingencies allow to fix certain features of the counterfactual world to be exactly as they are in the actual world, regardless of the system at hand. Such a contingency effectively modifies the dynamics of the underlying model, and one insight of our work is that defining actual causality for reactive systems also needs to modify the system under a contingency. Notably, most works regarding trace causality~\cite{CaltaisGL18,GosslerM13} do not consider contingencies but only counterfactuals, and thus are not able to find true actual causes. 

In this paper, we develop the notion of actual causality for effects described by HyperLTL formulas and use the generated causes as explanations for counterexamples returned by a model checker. We show that an implementation of our algorithm is practically feasible and significantly increases the state-of-the-art in explaining and analyzing HyperLTL model checking results.

\section{Preliminaries}
We model a system as a \emph{Moore machine}~\cite{moore1956gedanken} $T = (S, s_0, AP, \delta, l)$ where 
$S$ is a finite set of states, $s_0 \in S$ is the initial state, $AP = I \cupdot O$ is the set of atomic propositions consisting of inputs $I$ and outputs $O$, $\delta: S \times 2^I \rightarrow S$ is the transition function determining the successor state for a given state and set of inputs, and $l: S \rightarrow 2^O$ is the labeling function mapping each state to a set of outputs. 
{A \emph{trace} $t = t_0 t_1 t_2 \ldots \in (2^{AP})^\omega$ of $T$ is an infinite sequence of sets of atomic propositions with $t_i = A \cup l(s_i)$, where $A \subseteq I$ and $\delta(s_{i}, A) = s_{i+1}$ for all $i \geq0$. We usually write $t[n]$ to refer to the set $t_n$ at the $n$-th position of $t$.}
With $\mathit{traces}(T)$, we denote the set of all traces of $T$. {For some sequence of inputs $a = a_0 a_1 a_2 \ldots \in (2^{I})^\omega$, the trace $T(a)$ is defined by $T(a)_i = a_i \cup l(s_i)$ and $\delta(s_{i}, a_i) = s_{i+1}$ for all $i \geq0$.}
A trace property $P \subseteq T$ is a set of traces. 
A hyperproperty $H$ is a lifting of a trace property, i.e., a \emph{set of sets of traces}.
A model $T$ satisfies a hyperproperty $H$ if the set of traces of $T$ is an element of the hyperproperty, i.e., $\mathit{traces}(T) \in H$.

\subsection{HyperLTL}\label{subsec:hyperltl}

HyperLTL is a recently introduced logic for expressing temporal hyperproperties, extending linear-time temporal logic (LTL)~\cite{DBLP:conf/focs/Pnueli77} with trace quantification:
\begin{align*}
\varphi &\Coloneqq \forall \pi \ldot \varphi \mid \exists \pi \ldot \varphi \mid~\psi~\\
\psi &\Coloneqq a_\pi \mid \neg \psi \mid \psi \land \psi \mid \X \psi \mid \psi \U \psi \enspace 
\end{align*}
We also consider the usual derived Boolean ($\lor$, $\rightarrow$, $\leftrightarrow$) and temporal operators ($\varphi \R \psi \equiv \neg(\neg \varphi \U \neg \psi)$, $\F \varphi \equiv \true \U \varphi$, $\G \varphi \equiv \false \LTLrelease \varphi$).
The semantics of HyperLTL formulas is defined with respect to a set of traces $\Tr$ and a trace assignment $\traceassign: \tracevars \rightarrow \Tr$ that maps trace variables to traces. 
To update the trace assignment so that it maps trace variable $\pi$ to trace $t$, we write $\traceassign[\pi \mapsto t]$. 
\begin{equation*}
\begin{array}{lll}
\traceassign,i \models_\Tr a_\pi       \qquad \qquad & \text{iff } & a \in \traceassign(\pi)[i] \\
\traceassign,i \models_\Tr \neg \varphi              & \text{iff } & \traceassign,i \nmodels_\Tr \varphi \\
\traceassign,i \models_\Tr \varphi \land \psi         & \text{iff } & \traceassign,i \models_\Tr \varphi \text{ and } \traceassign,i \models_\Tr \psi \\
\traceassign,i \models_\Tr \X \varphi                & \text{iff } & \traceassign,i+1 \models_\Tr \varphi \\
\traceassign,i \models_\Tr \varphi\U\psi             & \text{iff } & \exists j \geq i \ldot \traceassign, j \models_\Tr \psi \land \forall i \leq k < j \ldot \traceassign,k \models_\Tr \varphi \\
\traceassign,i \models_\Tr \exists \pi \ldot \varphi & \text{iff } & \text{there is some } t \in \Tr \text{ such that } \traceassign[\pi \mapsto t],i \models_\Tr \varphi\\
\traceassign,i \models_\Tr \forall \pi \ldot \varphi & \text{iff } & \text{for all } t \in \Tr \text{ it holds that } \traceassign[\pi \mapsto t],i \models_\Tr \varphi
\end{array}
\end{equation*}

We explain counterexamples found by \mchyper~\cite{DBLP:conf/cav/FinkbeinerRS15,DBLP:conf/cav/CoenenFST19}, which is a model checker for HyperLTL formulas, building on \abc~\cite{10.1007/978-3-642-14295-6_5}.
\mchyper takes as inputs a hardware circuit, specified in the \aiger format~\cite{Biere-FMV-TR-07-1}, and a HyperLTL formula.
\mchyper solves the model checking problem by computing the self-composition~\cite{DBLP:journals/mscs/BartheDR11} of the system.
If the system violates the HyperLTL formula, \mchyper returns a counterexample.
This counterexample is a set of traces through the original system that together violate the HyperLTL formula.
Depending on the type of violation, this counterexample can then be used to debug the circuit or refine the specification iteratively.

\subsection{Actual Causality}
\label{subsec:ac_causality}

A formal definition of what actually causes an observed effect in a given context has been proposed by Halpern and Pearl~\cite{HalpernPearl05a}. Here, we outline the version later modified by Halpern~\cite{Halpern15}. Causality is defined with respect to a \emph{causal model} $\mathcal{M} = (\mathcal{S},\mathcal{F})$, given by a \emph{signature} $\mathcal{S}$ and set of \emph{structural equations} $\mathcal{F}$, which define the dynamics of the system. A signature $\mathcal{S}$ is a tuple $(\mathcal{U},\mathcal{V},\mathcal{D})$, where
$\mathcal{U}$ and $\mathcal{V}$ are disjoint sets of variables, termed \emph{exogenous} and \emph{endogenous} variables, respectively; 
and $\mathcal{D}$ defines  the \emph{range} of possible values $\mathcal{D}(Y)$ for all variables $Y \in \mathcal{U}\cup\mathcal{V}$.
A \emph{context} $\vec{u}$ is an assignment to the variables in $\mathcal{U}\cup \mathcal{V}$ such that 
the values of the exogenous variables are determined by factors outside of the model, while the value of some endogenous variable $X$ is defined by the associated structural equation $f_X \in \mathcal{F}$. 
 An \emph{effect} $\varphi$ in a causal model is a Boolean formula over assignments to endogenous variables. 
We say that a context $\vec{u}$ of a model $\mathcal{M}$ satisfies a partial variable assignment $\vec{X} = \vec{x}$ for $\vec{X}\subseteq \mathcal{U}\cup \mathcal{V}$ if the assignments in $\vec{u}$ and in $\vec{x}$ coincide for every variable $X \in \vec{X}$. The extension for Boolean formulas over variable assignments is as expected. 
For a context $\vec{u}$ and a partial variable assignment $\vec{X} = \vec{x}$, we denote by $(\mathcal{M}, \vec{u})[\vec{X} \leftarrow \vec{x}]$ the context $\vec{u}'$ in which the values of the variables in $\vec{X}$ are set according to $\vec{x}$, and all other values are computed according to the structural equations. 

The actual causality framework of Halpern and Pearl aims at defining what events (given as variable assignments) are the cause for the occurrence of an effect in a specific given context. We now provide the formal definition.

\begin{definition}[\cite{HalpernPearl05a,Halpern15}]\label{def:HP_causality}
A partial variable assignment	$\vec{X} = \vec{x}$ is an \emph{actual cause} of the effect $\varphi$ in $(\mathcal{M},\vec{u})$ if the following three conditions hold.
	\begin{description}
		\item AC1: $(\mathcal{M},\vec{u}) \models \vec{X} = \vec{x}$ and $(\mathcal{M},\vec{u}) \models \varphi$, i.e., both cause and effect are true in the actual world.
		\item AC2: There is a set $\vec{W}\subseteq V$ of endogenous variables and an assignment $\vec{x}'$ to the variables in $\vec{X}$ s.t. if $(\mathcal{M},\vec{u}) \models \vec{W} = \vec{w}$, then $(\mathcal{M},\vec{u})  [\vec{X} \leftarrow \vec{x}',\vec{W} \leftarrow \vec{w} ] \models \lnot \varphi$.
		\item AC3: $\vec{X}$ is minimal, i.e. no subset of $\vec{X}$ satisfies AC1 and AC2.
	\end{description}
\end{definition}

Intuitively, AC2 states that in the counterfactual world obtained by intervening on the cause $\vec{X} = \vec{x}$ in the actual world (that is, setting the variables in $\vec{X}$ to $\vec{x}'$), the effect does not appear either. However, intervening on the possible cause might not be enough, for example when that cause preempted another. After intervention, this other cause may produce the effect again, therefore clouding the effect of the intervention. {To address this problem, AC2 allows to reset values through the notion of \emph{contingencies}, i.e., the set of variables $\vec{W}$ can be reset to $\vec{w}$, which is (implicitly) universally quantified. However, since the actual world has to model $\vec{W} = \vec{w}$, it is in fact uniquely determined. AC3, lastly, enforces the cause to be minimal by requiring that all variables in $\vec{X}$ are strictly necessary to achieve AC1 and AC2.} For an illustration of Halpern and Pearl's actual causality, see Ex.~\ref{ex:hp_causality} in Sec.~\ref{sec:running_ex}.

\section{Running Example}\label{sec:running_ex}

Consider a security-critical setting with two security levels: a high-security level $h$ and a low-security level $l$.
Inputs and outputs labeled as high-security, denoted by $\mathit{hi}$ and $\mathit{ho}$ respectively, are confidential and thus only visible to the user itself, or, e.g., admins.
Inputs and outputs labeled as low-security, denoted by $\mathit{li}$ and $\mathit{lo}$ respectively, are public and are considered to be observable by an attacker.
\begin{wrapfigure}{r}{0.3\textwidth}
    \centering
    \resizebox{.3\textwidth}{!}{
    \begin{tikzpicture}
			[->,shorten >=1pt,auto,node distance=3cm and 4cm, on grid,initial text=,
			every state/.style={minimum size=45pt,inner sep=0pt},
			every node/.style={font=\small}]
			
			\node(s0) [state, initial above] {\begin{tabular}{c}
					$s_0$\\ $\emptyset$
				\end{tabular}};
			\node(s1) [state, right of=s0,yshift=2.5em] {\begin{tabular}{c}
					$s_1$\\ $\{\mathit{ho}\}$
			\end{tabular}};
		\node(s2) [state, below of=s0] {\begin{tabular}{c}
		$s_2$\\ $\{\mathit{lo}\}$
	\end{tabular}};
\node(s3) [state, right of=s2,yshift=2.5em] {\begin{tabular}{c}
$s_3$\\ $\{\mathit{ho},\mathit{lo}\}$
\end{tabular}};
			
			\path[->] (s0)	edge node[pos=0.55]{$\mathit{hi}$}  (s1)
			(s0)	edge node[swap]{$\lnot \mathit{hi}$}  (s2)
			(s2)	edge node[swap]{$\mathit{hi}$}  (s1)
			(s1)	edge node[]{$\top$}  (s3)
			(s2)	edge node[pos=0.25,swap]{$\lnot\mathit{hi}$}  (s3)
			(s3)	edge[loop below] node[]{$\top$}  (s3);
		\end{tikzpicture}
		}
    \caption{State graph representation of our example system.}
    \label{fig:running_example_system}
\end{wrapfigure}
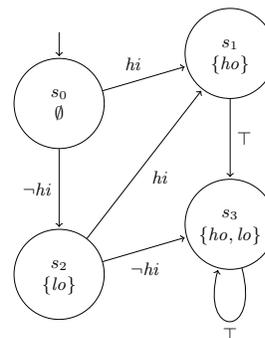
Our system of interest is modeled by the state graph representation shown in Fig. \ref{fig:running_example_system}, which is treated as a black box by an attacker.
The system is run without any low-security inputs, but branches depending on the given high-security inputs. If in one of the first two steps of an execution, a high-security input $\mathit{hi}$ is encountered, the system outputs only the high-security variable $\mathit{ho}$ directly afterwards and in the subsequent steps both outputs, regardless of inputs. If no high-security input is given in the first step, the low-security output $\mathit{lo}$ is enabled and after the second step, again both outputs are enabled, regardless of what input is fed into the system.

A prominent example hyperproperty is \emph{observational determinism} from the introduction:
which states that any sequence of low-inputs always produces the same low-outputs, regardless of what the high-security level inputs are.
$
\varphi = \forall \pi. \forall \pi'. \G (\mathit{li}_{\pi} \leftrightarrow \mathit{li}_{\pi'}) \rightarrow \G (\mathit{lo}_{\pi} \leftrightarrow \mathit{lo}_{\pi'})$.
The formula states that all traces $\pi$ and $\pi'$ must agree in the low-security outputs if they agree in the low-security inputs. Our system at hand does not satisfy observational determinism, because the low-security outputs in the first two steps depend on the present high-security inputs. Running MCHyper, a model checker for HyperLTL, results in the following counterexample:
$ t_1 =  \{\} \{\mathit{lo}\} \{\mathit{ho},\mathit{lo}\}^\omega$ and 
 $t_2 = \{\mathit{hi}\} \{\mathit{hi},\mathit{ho}\} \{\mathit{ho},\mathit{lo}\}^\omega$ .
With the same low-security input (none) the traces produce different low-security outputs by visiting $s_1$ or $s_2$ on the way to $s_3$.

In this paper, our goal is to explain the violation of a HyperLTL formula on such a counterexample. Following Halpern and Pearl's explanation framework~\cite{HalpernPearl05b}, an actual cause that is considered to be possibly true or possibly false constitutes an explanation for the user. We only consider causes over input variables, which can be true and false in any model. Hence, finding an explanation amounts to answering which inputs caused the violation on a specific counterexample. Before we answer this question for HyperLTL and the corresponding counterexamples given by sets of traces (see Sec.~\ref{sec:trace_causality}), we first illustrate Halpern and Pearl's actual causality (see Sec.~\ref{subsec:ac_causality}) with the above running example.

\begin{example}\label{ex:hp_causality}
Finite executions of a system can be modeled in Halpern and Pearl's causal models. Consider inputs as exogenous variables $\mathcal{U} = \{\mathit{hi}_0,\mathit{hi}_1\}$ and outputs as endogenous variables $\mathcal{V} = \{\mathit{lo}_1,\mathit{lo}_2,\mathit{ho}_1,\mathit{ho}_2\}$. The indices model at which step of the execution the variable appears. We omit the inputs at the third position and the outputs at the first position because they are not relevant for the following exposition. We have that $\mathcal{D}(Y) = \{0,1\}$ for every $Y \in \mathcal{U}\cup\mathcal{V}$. Now, the following manually constructed structural equations encode the transitions: (1) $\mathit{lo}_1 = \lnot \mathit{hi}_0$, (2) $\mathit{ho}_1 = \mathit{hi}_0$, (3) $\mathit{lo}_2 = \lnot \mathit{hi}_1 \lor \lnot \mathit{lo}_1$ and (4) $\mathit{ho}_2 =  \mathit{lo}_1 \lor \mathit{ho}_1$. Consider context $\vec{u} = \{\mathit{hi}_0 = 0, \mathit{hi}_1 = 1\}$, effect $\varphi \equiv \mathit{lo}_1 = 1 \lor \mathit{lo}_2 = 1$, and candidate cause $\mathit{hi}_0 = 0$. Because of Eq.~(1), we have that $(\mathcal{M},\vec{u}) \models {hi}_0 = 0$ and $(\mathcal{M},\vec{u}) \models \mathit{lo}_1 = 1$, hence AC1 is satisfied. Regarding AC2, this example allows us to illustrate the need for contingencies to accurately determine the actual cause: If we only consider intervening on the candidate cause $\mathit{hi}_0 = 0$, we still have $(\mathcal{M},\vec{u})[\mathit{hi}_0 \leftarrow 1] \models \varphi$, because with $\mathit{lo}_1 = 0$ and Eq.~(3) it follows that $(\mathcal{M},\vec{u}) \models \mathit{lo}_2 = 1$. However, in the actual world, the second high input has no influence on the effect. We can control for this by considering the contingency $\mathit{lo}_2 = 0$, which is satisfied in the actual world, but not after the intervention on $\mathit{hi}_0$. Because of this contingency, we then have that $(\mathcal{M},\vec{u})[\mathit{hi}_0 \leftarrow 1, \mathit{lo}_2 \leftarrow 0] \models \lnot \varphi$, and hence, AC2 holds. Because a singleton set automatically satisfies AC3, we can infer that the first high input $\mathit{hi}_0$ was the actual cause for any low output to be enabled in the actual world. Note that, intuitively, the contingency allows us to ignore some of the structural equations by ignoring the value they assign to $\mathit{lo}_2$ in this context. Our definitions in Sec.~\ref{sec:trace_causality} will allow similar modifications for counterexamples to hyperproperties.
\end{example}

\section{Causality for Hyperproperty Violations}\label{sec:trace_causality}

Our goal in this section is to formally define actual causality for the violation of a hyperproperty described by a general HyperLTL formula $\varphi$, observed in a counterexample to $\varphi$. Such a counterexample is given by a trace assignment to the trace variables appearing in $\varphi$. Note that, for universal quantifiers, the assignment of a single trace to the bounded variable suffices to define a counterexample. For existential quantifiers, this is not the case: to prove that an existential quantifier cannot be instantiated we need to show that no system trace satisfies the formula in its body, i.e., provide a proof for the whole system. In this work, we are interested in explaining violations of hyperproperties, and not proofs of their satisfaction~\cite{DBLP:journals/tocl/ChocklerHK08}. Hence, we limit ourselves to instantiations of the outermost universal quantifiers of a HyperLTL formula, which can be returned by model checkers like MCHyper~\cite{DBLP:conf/cav/FinkbeinerRS15,DBLP:conf/cav/CoenenFST19}.
Since our goal is to explain counterexamples, restricting ourselves to results returned by existing model checkers is reasonable. 
Note that MCHyper can still handle formulas of the form $\forall^n \exists^m \varphi$ where $\varphi$ is quantifier free, including interesting information flow policies like generalized noninterference~\cite{GNI}.
The returned counterexample then only contains $n$ traces that instantiate the universal quantifiers, the existential quantifiers are not instantiated for the above reason. 
In the following, we restrict ourselves to formulas and counterexamples of this form. 

\begin{definition}[Counterexample]\label{def:universal_cex}
Let $T$ be a transition system and denote $\mathit{Traces}(T) := \Tr$, and let $\varphi$ be a HyperLTL formula of the form $\forall \pi_1 \ldots \forall \pi_k \psi$, where $\psi$ is a HyperLTL formula that does not start with $\forall$. A counterexample to $\varphi$ in $T$ is a partial trace assignment $\cex: \{\pi_1,\ldots,\pi_k\} \rightarrow \Tr$ such that $\cex,0 \models_{\Tr} \lnot \psi$. 
\end{definition}

For ease of notation, we sometimes refer to $\cex$ simply as the tuple of its instantiations $\cex = \langle \cex(\pi_1), \ldots, \cex(\pi_k) \rangle $. 
In terms of Halpern and Pearl's actual causality as outlined in Sec.~\ref{subsec:ac_causality}, a counterexample describes the actual world at hand, which we want to explain. As a next step, we need to define an appropriate language to reason about possible causes and contingencies in our counterexample. We will use sets of \emph{events}, i.e., values of atomic propositions at a specific position of a specific trace in the counterexample.

\begin{definition}[Event]
An event is a tuple $ e = \langle l_a, n, t \rangle$ such that $l_a = a$ or $l_a = \neg a$ for some atomic proposition
$a \in AP$, $n\in \mathbb{N}$ is a point in time, and $t \in (2^{AP})^\omega$ is a trace of a system $T$. We say that a counterexample $\cex = \langle t_1, \ldots t_k \rangle$ \emph{satisfies}
a set of events $\cause$, and denote $\cex \models \cause$, 
if for every event $ \langle l_a, n, t \rangle \in \cause$
the two following conditions hold:
\begin{enumerate}
    \item  $t = t_i$ for some $i\in \{1,\ldots,k\}$, i.e., all events in $\cause$ reason about traces in $\cex$,
    \item  {$l_a = a$ iff $a\in t_i[n]$, i.e.,
$a$ holds on trace $t_i$ of the counterexample at time $n$.}  
\end{enumerate}
\end{definition}

We assume that the set $\mathit{AP}$ is a disjoint union of input an output propositions, that is, $\mathit{AP} = I \cupdot O $. 
We say that $\langle l_a,n,t \rangle$ is an \emph{input event} if $a\in I$, and we call it an \emph{output event} if $a\in O$. We denote the set of input events by $\mathit{IE}$ and the set of output events by $\mathit{OE}$. These events have a direct correspondence with the variables appearing in Halpern and Pearl's causal models: we can identify input events with exogenous variables (because their value is determined by factors outside of the system) and output events with endogenous variables.

We define a cause as a set of input events, while
 an effect is a possibly infinite Boolean formula over $OE$. Note that, similar to~\cite{conf/stacs/Finkbeiner017}, every HyperLTL formula can be represented as a first order formula over events, e.g. $\forall\pi\forall\pi'\G (a_\pi \leftrightarrow a_{\pi'}) =  \forall\pi\forall\pi'\bigwedge_{n\in\mathbb{N}}  (\langle a, n, \pi \rangle \leftrightarrow \langle a, n, \pi' \rangle)$. For some set of events $\mathcal{S}$, let ${}^+\!\mathcal{S}_\pi^k = \{ a \in \mathit{AP}~|~\langle a, k, \pi \rangle \in \mathcal{S}\}$ denote the set of atomic propositions defined positively by $\mathcal{S}$ on trace $\pi$ at position $k$. Dualy, we define ${}^-\!\mathcal{S}_\pi^k = \{ a \in \mathit{AP}~|~\langle \lnot a, k, \pi \rangle \in \mathcal{S}\}$.

In order to define actual causality for hyperproperties we need to formally define how we obtain the counterfactual executions under some contingency for the case of events on infinite traces. We define a contingency as a set of output events. Mapping Halpern and Pearl's definition to transition systems, contingencies reset outputs in the counterfactual traces back to their value in the original counterexample, which amounts to changing the state of the system, and then following the transition function from the new state. For a given trace of the counterexample, we describe all possible behaviors under \emph{arbitrary} contingencies with the help of a counterfactual automaton. The concrete contingency on a trace is defined by additional input variables. In the following, let $\mathit{IC} = \{o^C~|~o \in O \}$ be a set of auxiliary input variables expressing whether a contingency is invoked at the given step of the execution and $c : O \to \mathit{IC}$ be a function s.t. $c(o) = o^C$. 

\begin{definition}[\CF~Automaton]\label{def:CF_automaton}
Let $T = (S, s_0, \mathit{AP}, \delta, l)$ be a system with $S = 2^{\mathit{O}}$, i.e., every state is uniquely labeled, and there exists a state for every combination of outputs. Let $\pi = \pi_0 \ldots \pi_i (\pi_{j} \ldots \pi_{n})^\omega \in \traces(T)$ be a trace of $T$ in a finite, lasso-shaped representation. The counterfactual automaton $T^C_\pi = (S \times \{0 \ldots n\}, (s_0,0), (\mathit{IC} \cupdot I) \cupdot (O \cupdot \{0 \ldots n\}), \delta^C, l^C)$ is defined as follows:
\begin{itemize}
    \item $\delta^C((s,k),Y) = (s',k')$ where $k'=j$ if $k = n$, else $k' = k+1$, and\\ $l(s') = \{ o \in O~|~(o \in \delta(s,Y\cap I) \land c(o) \not\in Y) \lor (o \in \pi_{k'} \land c(o) \in Y)\}$,
    \item $l^C(s,k) = l(s) \cup \{k\}$ .
\end{itemize}

\end{definition}

A counterfactual automaton is effectively a chain of copies of the original system, of the same length as the counterexample. An execution through the counterfactual automaton starts in the first copy corresponding to the first position in the counterexample trace, and then moves through the chain until it eventually loops back from copy $n$ to copy $j$. A transition in the counterfactual automaton can additionally specify setting as a contingency some output variable $o$ if the auxiliary input variable $o^C$ is enabled. In this case, the execution will move to a state in the next automaton of the chain where all the outputs are as usual, except $o$, which will have the same value as in the counterexample $\pi$. Note that, under the assumption that all states of the original system are uniquely labeled and there exists a state for every combination of output variables, the function $\delta^C$ is uniquely determined. \footnote{The same reasoning can be applied to arbitrary systems by considering for contingencies largest sets of outputs for which the assumption holds, with the caveat that the counterfactual automaton may model fewer contingencies. Consequently, computed causes may be less precise in case multiple causes appear in the counterexample.} A counterfactual automaton for our running example is described in App.~\ref{app:counterfactual_automaton}.

Next, we need to define how we intervene on a set of traces with a candidate cause given as a set of input events, and a contingency given as a set of output events. We define an intervention function, which transforms a trace of our original automaton to an input sequence of an counterfactual automaton.

\begin{definition}[Intervention]
For a cause $\cause \subseteq \mathit{IE}$, a contingency $\contingency \subseteq \mathit{OE}$ and a trace $\pi$ , the function $\intervene : (2^\mathit{AP})^\omega \times 2^\mathit{IE} \times 2^\mathit{OE} \to (2^{\mathit{I}\cup \mathit{IC}})^\omega$ returns a trace such that for all $k \in \mathbb{N}$ the following holds:
$ \intervene(\pi,\cause,\contingency)[k] = (\pi[k] \setminus {}^+\cause_\pi^k) \cup {}^-\cause_\pi^k \cup \{ c(o)~|~o \in \!{}^+\contingency_\pi^k \cup \!{}^-\contingency_\pi^k\}.$
We lift the intervention function to counterexamples given as a tuple $\cex = \langle \pi_1, \ldots, \pi_k \rangle$ as follows: 
$\intervene(\cex,\cause,\contingency) =\\ \langle T^C_{\pi_1}(\intervene(\pi_1,\cause,\contingency)), \ldots , T^C_{\pi_k}(\intervene(\pi_k,\cause,\contingency))\rangle.$

\end{definition}

Intuitively, the intervention function \emph{flips} all the events that appear in the cause $\cex$: If some $a \in \mathit{I}$ appears positively in the candidate cause $\cause$, it will appear negatively in the resulting input sequence, and vice-versa. For a contingency $\contingency$, the intervention function enables their auxiliary input for the counterfactual automaton at the appropriate time point irrespective of their value, as the counterfactual automaton will take care of matching the atomic propositions value to the value in the original counterexample $\cex$.

\subsection{Actual Causality for HyperLTL Violations}

We are now ready to formalize what constitutes an actual cause for the violation of a hyperproperty described by a HyperLTL formula.

\begin{definition}[Actual Causality for HyperLTL] \label{def:hyper_causality}
Let $\cex$ be a counterexample to a HyperLTL formula $\varphi$ in a system $T$.
The set $\cause$ is an actual cause for the violation of $\varphi$ on $\cex$ if the following conditions hold. 
	\begin{description}
	\item[SAT] $\cex \models \cause$.
		\item[CF] There exists a contingency $\contingency$ and a non-empty subset $\cause' \subseteq \cause$ such that:
		$
		\cex \models \contingency \text{  and  } \intervene(\cex,\cause',\contingency)  \models_{\traces(T)} \varphi
		$.
		%\[\modify(\contingency, \intervene(\cause,\cex)),0 \models_{\traces(T)} \varphi.\]
		\item[MIN] $\cause$ is minimal, i.e., no subset of $\cause$ satisfies SAT and CF.
	\end{description}
\end{definition}

Unlike in Halpern and Pearl's definition (see Sec.~\ref{subsec:ac_causality}), the condition SAT requires $\cex$ to satisfy only the cause, as
we already know that the effect $\lnot \varphi$, i.e., the violation of the specification, is satisfied by virtue of $\cex$ being a counterexample. 
CF is the counterfactual condition corresponding to AC2 in Halpern and Pearl's definition, and it states that after intervening on the cause, under a certain contingency, the set of traces satisfies the property.
(Note that we use a conjunction of two statements here while Halpern and Pearl use an implication. This is because they implicitly quantify universally over the values of the variables in the set $W$ (which should be as in the actual world) where in our setting the set of contingencies already defines explicit values.) 
MIN is the minimality criterion directly corresponding to AC3.

\begin{example}\label{ex:hyper_causality}
Consider our running example from Sec.~\ref{sec:running_ex}, i.e., the system from Fig.~\ref{fig:running_example_system} and the counterexample to observational determinism $\cex = \langle t_1, t_2 \rangle$. Let us consider what it means to intervene on the cause $\cause_1 = \{\langle\mathit{hi},0,t_2\rangle\}$. Note that we have $\cex \models \cause_1$, hence the condition SAT is satisfied. For CF, let us first consider an intervention without contingencies. This results in $\intervene(\cex,\cause_1,\emptyset) = \langle t_1', t_2'\rangle = \langle t_1, \{\} \{\mathit{hi},\mathit{lo}\} \{\mathit{ho}\} \{\mathit{ho},\mathit{lo}\}^\omega\rangle$.
However, $\intervene(\cex,\cause_1,\emptyset) \models_{\traces(T)} \lnot \varphi$, because the low outputs of $t_1'$ and $t_2'$ differ at the third position: $ \mathit{lo} \in t_1'[2]$ and $ \mathit{lo} \not\in t_2'[2]$. This is because now the second high input takes effect, which was preempted by the first cause in the actual counterexample. 
The contingency $\contingency_2 = \{\langle\mathit{lo},2,t_2 \rangle\rangle\}$ now allows us to control this by \emph{modyfing the state} after taking the second high input as~follows: $
\intervene(\cex,\cause_2,\contingency_2)) = \langle t_1'' , t_2'' \rangle = \langle t_1, \{\} \{\mathit{hi},\mathit{lo}\} \{\mathit{ho},\mathit{lo}\} \{\mathit{ho},\mathit{lo}\}^\omega\rangle$.
Note that $t_2''$ is not a trace of the model depicted in Fig.~\ref{fig:running_example_system}, because there is no transition that explains the step from $t_2''[1]$ to $t_2''[2]$. It is, however, a trace of the counterfactual automaton $T^C_{t_2}$ (see App.~\ref{app:counterfactual_automaton}), which encodes the set of counterfactual worlds for the trace $t_2$. The fact that we consider executions that are not part of the original system allows us to infer that only the first high input was an actual cause in our running example. Disregarding contingencies, we would need to consider both high inputs as an explanation for the violation of observational determinism, even though the second high input had no influence. Our treatment of contingencies corresponds directly to Halpern and Pearl's causal models, which allow to ignore certain structural equations as outlined in Ex.~\ref{ex:hp_causality}.
\end{example}

\emph{Remark:} 
With our definitions, we strictly generalize Halpern and Pearl's actual causality to reactive systems modeled as Moore machines and effects expressed as HyperLTL formulas. 
Their structural equation models can be encoded in a one-step Moore machine; effect specifying a Boolean combination of primitive events can be encoded in the more expressive logic HyperLTL. 
Just like for Halpern and Pearl, our actual causes are not unique. 
While there can exist several different actual causes, the set of all actual causes is always unique. 
It is also possible that no actual cause exists: If the effect occurs on all system traces, there may be no actual cause on a given individual trace.

\subsection{Finding Actual Causes with Model Checking }
In this section, we consider the relationship between finding an actual cause for the violation of a HyperLTL formula starting with a universal quantifier and model checking of HyperLTL. 
We show that the problem of finding an actual cause can be reduced to a model checking problem where the generated formula for the model checking problem has one additional quantifier alternation. 
While there might be a reduction resulting in a more efficient encoding, our current result suggests that causality checking is the harder problem. 
The key idea of our reduction is to use counterfactual automata (that encode the given counterexample and the possible counterfactual traces) together with the HyperLTL formula described in the proof to ensure the conditions SAT, CF, and MIN on the witnesses for the model checking result. 

\begin{proposition}\label{cc_as_mc}
We can reduce the problem of finding an actual cause for the violation of an HyperLTL formula starting with a universal quantifier to the HyperLTL model checking problem with one additional quantifier alternation. 
\end{proposition}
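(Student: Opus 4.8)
The plan is to turn the search for an actual cause into a single HyperLTL model checking query $\hat{T} \models \hat{\varphi}$, where $\hat{T}$ is built from $T$ and the counterexample $\cex = \langle t_1,\ldots,t_k\rangle$, and $\hat{\varphi}$ from $\varphi = \forall\pi_1\ldots\forall\pi_k\,\psi$, in such a way that $\hat{T}\nmodels\hat{\varphi}$ exactly when an actual cause exists, and the counterexample returned for $\hat{\varphi}$ encodes one. Before constructing anything I would simplify the cause condition: since SAT restricts $\cause$ to input events consistent with $\cex$, and since any non-empty $\cause'\subseteq\cause$ witnessing CF itself satisfies SAT and CF (use $\cause'$ as its own witnessing subset), MIN forces $\cause=\cause'$ for an actual cause. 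Hence $\cause$ is an actual cause iff it is a $\subseteq$-minimal non-empty set of input events with $\cex\models\cause$ for which there is a contingency $\contingency$ with $\cex\models\contingency$ and $\intervene(\cex,\cause,\contingency)\models_{\traces(T)}\varphi$. This eliminates the inner subset quantifier of CF.

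Next I would construct $\hat{T}$. By the footnote after Def.~\ref{def:CF_automaton} we may assume $T$ has uniquely labelled states. Let $\hat{T}$ be the disjoint union of the counterfactual automata $T^C_{t_1},\ldots,T^C_{t_k}$ (each built from the lasso representation of $t_i$) together with one further copy of $T$ itself, with the components marked by fresh output propositions $\mathsf{id}_1,\ldots,\mathsf{id}_k,\mathsf{orig}$; this is a finite, effectively constructible system and hence amenable to a HyperLTL model checker. The crucial observation — essentially the content of Def.~\ref{def:CF_automaton} together with the definition of $\intervene$ — is that the traces of the component $T^C_{t_i}$ are in bijection with the pairs $(\cause_i,\contingency_i)$ of a cause and a contingency over $t_i$ that are consistent with $\cex$: from a trace $\sigma$ of $T^C_{t_i}$ one reads $\cause_i(\sigma)$ off the positions where the $I$-letters of $\sigma$ differ from those of $t_i$ and $\contingency_i(\sigma)$ off the auxiliary $\mathit{IC}$-letters of $\sigma$, and the $\mathit{AP}$-reduct of $\sigma$ equals $T^C_{t_i}(\intervene(t_i,\cause_i(\sigma),\contingency_i(\sigma)))$, i.e.\ the $i$-th component of $\intervene(\cex,\cdot,\cdot)$. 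Since each $t_i$ is a fixed finite lasso whose positions are exposed by the $\{0,\ldots,n\}$ part of $T^C_{t_i}$'s labelling, the predicates ``$\langle\sigma_1,\ldots,\sigma_k\rangle$ encodes a non-empty cause'' and ``the cause encoded by $\langle\rho_1,\ldots,\rho_k\rangle$ is a proper subset of the one encoded by $\langle\sigma_1,\ldots,\sigma_k\rangle$'' are expressible as plain LTL over the respective tuples of traces.

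Then I would assemble $\hat{\varphi}$. Write $\Psi^+(x_1,\ldots,x_k)$ for the formula obtained from $\psi$ by replacing every $a_{\pi_i}$ with $a_{x_i}$ and relativising every inner quantifier to the $\mathsf{orig}$-component, so that the inner quantifiers range over $\traces(T)$ exactly, as required by $\models_{\traces(T)}$; by the correspondence above, $\Psi^+(\sigma_1,\ldots,\sigma_k)$ expresses $\intervene(\cex,\cause(\vec{\sigma}),\contingency(\vec{\sigma}))\models_{\traces(T)}\varphi$. Using the simplified cause condition, ``$\vec{\sigma}$ encodes an actual cause'' becomes $\mathrm{wf}(\vec{\sigma})\land\Psi^+(\vec{\sigma})\land\forall\rho_1\ldots\forall\rho_k\,(\mathrm{propersub}(\vec{\rho},\vec{\sigma})\rightarrow\lnot\Psi^+(\vec{\rho}))$, where $\vec{\sigma}$ and $\vec{\rho}$ are quantified over the $T^C$-components (tagged with the $\mathsf{id}_i$). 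Taking $\hat{\varphi} := \forall\sigma_1\ldots\forall\sigma_k\,\lnot(\cdots)$ and moving to prenex form yields a prefix $\forall^k\exists^k$ followed by the merged prefixes coming from $\psi$ and its negation inside the matrix; when $\psi$ is quantifier-free this is exactly $\forall^k\exists^k$, one alternation more than the $\forall^k$ prefix of $\varphi$, and the case of a general body $\psi$ is handled analogously. Finally, a failing run of the model checker on $\hat{T}\models\hat{\varphi}$ returns a trace tuple $\vec{\sigma}$ that encodes an actual cause, and a successful run certifies that none exists.

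The main obstacle is the correspondence stated in the second paragraph. Contingencies are the delicate ingredient of Halpern--Pearl causality, and the whole reduction rests on the fact that the counterfactual automaton ``compiles away'' arbitrary contingencies into ordinary nondeterministic inputs, so that a trace quantifier ranging over $T^C_{t_i}$ really ranges over all cause/contingency pairs consistent with $\cex$ and over nothing else; making this precise requires unfolding Def.~\ref{def:CF_automaton} and the intervention function and verifying that the $\mathit{AP}$-reduct of the chosen trace is the corresponding intervened trace, including at and after the loop-back point of the lasso. The remaining tasks — correctness of the LTL encodings of $\mathrm{wf}$ and $\mathrm{propersub}$, the relativisation of the inner quantifiers to $\mathsf{orig}$, and the exact counting of quantifier alternations in the prenex form for a general body $\psi$ — are routine but need to be carried out with care.
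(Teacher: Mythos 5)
Your proposal is correct and follows essentially the same route as the paper's proof: both combine the counterfactual automata $T^C_{t_i}$ (which compile causes and contingencies into ordinary nondeterministic inputs) with a tagged copy of $T$, read the cause off as the set of input positions where a witness trace differs from $t_i$ while enforcing the lasso's periodic input structure, express minimality via a universal quantification over competing witness tuples, and relativize the inner quantifiers of $\varphi$ to the original component, yielding exactly one additional quantifier alternation. The only cosmetic difference is that you check violation of a $\forall^k\exists^k\cdots$ formula whereas the paper checks satisfaction of the dual $\exists^k\forall^k\cdots$ formula $\psi_{\mathit{actual}}$ over the product $T \times T_1^C \times \cdots \times T_k^C$.
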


\begin{proof}
Let $\cex = \langle t_1, \ldots t_k \rangle$ be a counterexample for the formula $\forall \pi_1 \ldots \forall \pi_k .\varphi$ where $\varphi$ is a HyperLTL formula that does not have a universal first quantifier. 
We provide the proof for the case of $\cex = \langle t_1, t_2 \rangle$ for readability reasons, but it can be extended to any natural number $k$. 
We assume that 
$t_1, t_2$ have some $\omega$-regular representation, as otherwise the initial problem of computing causality is not well defined.
That is, we denote $t_i = u_i (v_i)^\omega$ such that $|u_i \cdot v_i| = n_i$.

In order to find an actual cause, we need to find a pair of traces $t_1', t_2'$ that are counterfactuals for $t_1, t_2$; satisfy the property $\varphi$; and the changes from $t_1, t_2$ to $t_1', t_2'$  are minimal with respect to set containment. Changes in inputs between $t_i$ and $t_i'$ in the loop part $v_i$ should reoccur in $t_i'$ repeatedly. 
Note that the differences between the counterexample $\langle t_1, t_2\rangle$ and the witness of the model checking problem $\langle t_1', t_2'\rangle$ encode the actual cause, i.e. in case of a difference, the cause contains the event that is present on the counterexample. 
To reason about these changes, we use the counterfactual automaton $T_i^C$ for each $t_i$, {which also allows us to search for the contingency $\mathcal{W}$ as part of the input sequence of $T_i^C$.} Note that each $T_i^C$ consists of $n_i$ copies, that indicate in which step the automaton is with respect to $t_i$ and its loop $v_i$. 
For $m>|u_i|$, we label each state $(s_i, m)$ in $T_i^C$  with the additional label 
$L_{s_m, i}$, to indicate that the system is now in the loop part of $t_i$. In addition, we add to the initial state of $T_i^C$ the label $l_i$, and we add to the initial state of the system $T$ the label $l_{\mathit{or}}$.  
 The formula $\psi^i_{\mathit{loop}}$ below states that the
 trace $\pi$ begins its run from the initial state of $T_i^C$ (and thus stays in this component through the whole run), and that
 every time $\pi$ visits a state on the loop, the same input sequence is observed. This way we enforce the periodic input behavior of the traces $t_1, t_2$ on $t_1', t_2'$. 
$$ \psi^i_{\mathit{loop}}(\pi) := 
l_{i, \pi} \wedge 
\bigwedge_{L_{s_m,i}} \bigvee_{A\subseteq I} \G (L_{{s_m,i}, \pi} \rightarrow 
(\bigwedge_{a\in A} a_\pi \wedge \bigwedge_{a\notin A} \neg a_\pi))
$$

For a subset of locations $N\subseteq [1,n_i]$ and a subset of input propositions $A\subseteq I$ we define $\psi^i_{\mathit{diff}}[N, A] (\pi)$ that states that $\pi$ differs from $t_i$ in at least all events $\langle l_a, m, t_i \rangle $ for $a\in A, m\in N$; and the formula $\psi^i_{\mathit{eq}}[N, A] (\pi)$ that states that
for all events that are not defined by $A$ and $N$, 
$\pi$ 
is equal to $t_i$.

$$
\psi^i_{\mathit{diff}}[N, A] (\pi) = \bigwedge_{j\in N , a\in A}\X^j (a_\pi \not\leftrightarrow a_{t_i})   
$$

$$
\psi^i_{\mathit{eq}}[N, A] (\pi) = \bigwedge_{j\notin N , a\in I}\X^j (a_\pi \leftrightarrow a_{t_i})   \wedge \bigwedge_{j\in [1,n_i] , a\notin A}\X^j (a_\pi \leftrightarrow a_{t_i})   
$$

We now define the formula $\psi^i_{\mathit{min}}$ that states that the set of inputs (and locations) on which trace $\pi$
differs from $t_i$ is not contained in the corresponding set for $\pi'$. We only check locations up until the length $n_i$ of $t_i$.

$$
\psi^i_{\mathit{min}} (\pi, \pi') :=  \bigwedge_{N \subseteq[i, n_i]}  \bigwedge_{A\subseteq\aut{I}}   
\left(\left(\psi^i_{\mathit{diff}}[N, A] (\pi) \wedge  
\psi^i_{\mathit{eq}}[N, A] (\pi) \right)
\rightarrow \neg \psi^i_{\mathit{eq}}[N, A] (\pi') \right)
$$

Denote $\varphi := Q_1\tau_1 \ldots Q_n \tau_n. ~ \varphi' (\pi_1, \pi_2)$ where $Q_i\in \{ \forall, \exists\}$ and $\tau_i$ are trace variables for $i\in[1,n]$. The formula
$\psi_\mathit{cause}$ described below 
states that the two traces $\pi'_1$ and $\pi'_2$ are part of the systems $T_1^C, T_2^C$, and have the same loop structure as $t_1$ and $t_2$, and satisfy $\varphi$. That is, these traces can be obtained by changing the original traces $t_1, t_2$ and avoid the violation.  

$$
  \psi_\mathit{cause}(\pi'_1, \pi'_2) := \varphi'(\pi'_1,  \pi'_2) \wedge \bigwedge_{i=1,2} \psi^i_{\mathit{loop}}(\pi'_i)
$$

Finally, $\psi_\mathit{actual}$ described below 
states that
the counterfactuals $\pi_1', \pi_2'$ correspond to a minimal change in the input events with respect to $t_1, t_2$. 
All other traces that the formula reasons about start at the initial state of the original system and thus are not affected by the counterfactual changes. We verify $\psi_\mathit{actual}$
 against the product automaton $ T \times T_1^C \times T_2^C$ to find these traces $\pi_i' \in T^C_i$ that witness the presence of a cause, counterfactual and contingency.

\begin{align*}
  \psi_\mathit{actual} := \exists \pi'_1  \exists \pi'_2 . ~
 \forall \pi_1'' \pi_2'' . ~ Q_1\tau_1 \ldots Q_n \tau_n . ~
 \psi_\mathit{cause}(\pi'_1, \pi'_2) 
 \wedge \bigwedge_{i=1,2} (l_{i, \pi'_i} \wedge l_{i, \pi''_i}) \\
 \wedge \bigwedge_{i\in[1,n]} l_{\mathit{or}, \tau_i}  
 \wedge 
\left(  \psi_\mathit{cause}(\pi_1'', \pi_2'')  \rightarrow \left( 
 \bigwedge_{i=1,2} \psi^i_{\mathit{min}}(\pi'_i,\pi''_i) \right) \right)  
\end{align*}

Then, if there exists two such traces $\pi_1', \pi_2'$ in the system $ T \times T_1^C \times T_2^C$, they correspond to a minimal cause for the violation. Otherwise, there are no traces of the counterfactual automata that can be obtained from $t_1, t_2$ using counterfactual reasoning and satisfy the formula $\varphi$. 
\qed
\end{proof}

We have shown that we can use HyperLTL  model checking to find an actual cause for the violation of a HyperLTL formula. 
The resulting model checking problem has an additional quantifier alternation which suggests that identifying actual causes is a harder problem. 
Therefore, we restrict ourselves to finding actual causes for violations of universal HyperLTL formulas. 
This keeps the algorithms we present in the next section practical 
as we start without any quantifier alternation and need to solve a model checking problem with a single quantifier alternation. 
While this restriction excludes some interesting formulas, many can be strengthened into this fragment such that we are able to handle close approximations (c.f.~\cite{softwaredoping}). 
Any additional quantifier alternation from the original formula carries over to an additional quantifier alternation in the resulting model checking problem which in turn leads to an exponential blow-up. 
The scalability of our approach is thus limited by the complexity of the model checking problem. 

\section{Computing Causes for Counterexamples}
\label{sec:computing-semantic-explanations}

In this section, we describe our algorithm for finding actual causes of hyperproperty violations. 
Our algorithm is implemented on top of MCHyper~\cite{DBLP:conf/cav/FinkbeinerRS15}, a model checker for hardware circuits and the alternation-free fragment of HyperLTL. 
In case of a violation, our analysis enriches the provided counterexample with the actual cause which can explain the reason for the violaiton to the user.

We first provide an overview of our algorithm and then discuss each step in detail. 
First, we compute an over-approximation of the cause using a satisfiability analysis over the transitions taken in the counterexample.
This analysis results in a set of candidate events $\candidates$. 
As we show in Prop.~\ref{prop:overapp}, every actual cause $\cause$ for the violation is a subset of $\candidates$. 
In addition, in Prop.~\ref{prop:CF} we show that the set $\candidates$ satisfies conditions SAT and CF. 
To ensure MIN, we search for the smallest subset $\cause \subseteq \candidates$ that satisfies SAT and CF. 
This set $\cause$ is then our minimal and therefore actual cause. 

To check condition CF, we need to check the counterfactual of each candidate cause $\cause$, and potentially also look for contingencies for $\cause$. 
We separate our discussion as follows. We first discuss the calculation of the over-approximation $\candidates$ (Sec.~\ref{sec:candidates}), then we present the $\ActualCause$ algorithm that identifies a minimal subset of $\candidates$ that is an actual cause (Sec.~\ref{sec:checking_causality}), and finally we discuss in detail the calculation of contingencies (Sec.~\ref{sec:contingencies}).

In the following sections, we use a reduction of the universal fragment of HyperLTL to LTL, and the advantages of the linear translation of LTL to alternating automata, as we now briefly outline. 

\vspace{1ex}
\noindent
      \emph{HyperLTL to LTL.}  Let $\varphi$ be a $\forall^n$-HyperLTL formula and $\cex$ be the counterexample.  %(the process for $\cex^f$ is the 
    We construct an LTL formula $\varphi'$ from $\varphi$ as follows~\cite{RVHyper}: atomic propositions indexed with different trace variables are treated as different atomic propositions and trace quantifiers are eliminated. For example $\forall \pi,\pi'. a_\pi \wedge a_{\pi'}$ results in the LTL formula $a_\pi \wedge a_{\pi'}$.
    As for $\cex$, we use the same renaming in order to zip all traces into a single trace, for which we assume the finite representation $t'' = u'' \cdot (v'')^\omega$, which is also the structure of the model checker's output. The trace $t''$ is a violation of the formula $\varphi'$, i.e., $t''$ satisfies $\neg \varphi'$. We denote $\violation := \neg \varphi'$. 
We can then assume, for implementation concerns, that the specification (and its violation) is an LTL formula, and the counterexample is a single trace. 
After our causal analysis, the translation back to a cause over hyperproperties is straightforward as we maintain all information about the different traces in the counterexample. {Note that this translation works due to the synchronous semantics of HyperLTL.}

\vspace{1ex}
\noindent
\emph{Finite Trace Model Checking Using Alternating Automata.} 
In verifying condition CF (that is, in computing counterfactuals and contingencies), we need to apply finite trace model checking, as we want to check if the modified trace in hand still violates the specification $\varphi$, that is, satisfies $\bar\varphi$. To this end, we use the linear algorithm of~\cite{checkingfinitetraces}, that exploits the linear translation of $\violation$ to an alternating automaton $\mathcal{A}_{\violation}$, and using backwards analysis checks the satisfaction of the formula. 
An alternating automaton \cite{AlternatingAutomata} generalizes non-deterministic and universal automata, and its transition relation is a Boolean function over the states. The run of alternating automaton is then a \emph{tree run} that captures the conjunctions in the formula. 
We use the algorithm of~\cite{checkingfinitetraces} as a black box (see App.~\ref{prelim:alt} for a formal definition of alternating automata  
and App.~\ref{app:ltltoalt} for the translation from LTL to alternating automata).
For the computation of contingencies we use an additional feature of the algorithm of~\cite{checkingfinitetraces} -- the algorithm returns an accepting run tree $\mathcal{T}$ of $\mathcal{A}_{\violation}$ on $t''$, with annotations of nodes that represent atomic subformulas of $\violation$ that take part in the satisfaction of $\violation$. We use this feature also in Sec.~\ref{sec:candidates} when calculating the set of candidate causes. 

\subsection{Computing the Set of Candidate Causes}\label{sec:candidates}
The events that might have been a part of the cause to the violation are in fact all events that appear on the counterexample, or, equivalently, all events that appear in $u''$ and $v''$. Note that due to the finite representation, this is a finite set of events. Yet, not all events in this set can cause the violation. In order to remove events that could not have been a part of the cause, we perform an analysis of the transitions of the system taken during the execution of $t''$.
With this analysis we detect which events appearing in the trace locally cause the respective transitions, and thus might be part of the global cause. 
Events that did not trigger a transition in this specific trace cannot be a part of the cause. 
Note that causing a transition and being an actual cause are two different notions - actual causality is defined over the behaviour of the system, not on individual traces.
We denote the over-approximation of the cause as $\candidates$. 
Formally, we represent each transition as a Boolean function over inputs and states. Let $\delta_n$ denote the formula representing the transition of the system taken when reading $t''[n]$, and let $c_{a, n, i}$ be a Boolean variable that corresponds to the event $\langle a_{t_i}, n, t'' \rangle$.\footnote{That is, $\neg c_{a,n,i}$ corresponds to the event $\langle \neg a_{t_i}, n, t'' \rangle$. Recall that the atomic propositions on the zipped trace $t''$ are annotated with the original trace $t_i$ from $\cex$.  } 
Denote $\psi_{n}^t = \bigwedge_{a_{t_i}\in t''[n]} c_{a,n,i} \wedge \bigwedge_{a_{t_i}\notin t''[n]} \neg c_{a,n,i} $, that is, $\psi_n^t$ expresses the exact set of events in $t''[n]$. 
In order to find events that might trigger the transition $\delta_n$, we check
for the \emph{unsatisfiable core} of $ \psi_{n} = (\neg \delta_n) \wedge \psi_n^t$. 
Intuitively, the unsatisfiable core of $\psi_n$ is the set of events that force the system to take this specific transition. For every $c_{a, n, i}$ (or $\neg c_{a, n, i}$ ) in the unsatisfiable core that is also a part of $\psi_n^t$, we add $\langle a, n ,t_i\rangle$ (or $\langle \neg a, n ,t_i\rangle$) to $\candidates$. 

We use unsatisfiable cores in order to find input events that are necessary in order to take a transition. However, this might not be enough. There are cases in which inputs that appear in formula $\violation$ are not detected using this method, as they are not essential in order to take a transition; however, they might be considered a part of the actual cause, as negating them can avoid the violation. Therefore, as a second step, we apply the algorithm of~\cite{checkingfinitetraces}
on the annotated automaton $\mathcal{A}_{\violation}$ in order to find the specific events that affect the satisfaction of $\violation$, and we add these events to $\candidates$. 
Then, the unsatisfiable core approach provides us with inputs that affect the computation and might cause the violation even though they do not appear on the formula itself; while the alternating automaton allows us to find inputs that are not essential for the computation, but might still be a part of the cause as they appear on the formula.

\begin{proposition} \label{prop:overapp}
The set $\candidates$ is indeed an over-approximation of the cause for the violation. That is, every actual cause $\cause$ for the violation is a subset of $\candidates$.
\end{proposition}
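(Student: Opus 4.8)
The plan is to prove the statement event-by-event: it suffices to show that if $\cause$ is an actual cause for the violation and $e \in \cause$, then $e \in \candidates$. First I would unfold Definition~\ref{def:hyper_causality}. Condition SAT is downward closed ($\cex \models \cause$ implies $\cex \models \cause''$ for every $\cause'' \subseteq \cause$), so the minimality condition MIN is equivalent to: no proper subset of $\cause$ satisfies CF. Applied to the witnessing subset $\cause'$ occurring in CF for $\cause$, this forces $\cause' = \cause$; hence there is a contingency $\contingency$ with $\cex \models \contingency$ and $\intervene(\cex,\cause,\contingency) \models_{\traces(T)} \varphi$. Fix any $e = \langle l_a, n, t_i \rangle \in \cause$; since causes consist of input events, $a \in I$, and $e$ corresponds on the zipped trace $t''$ to the literal $c_{a,n,i}$ (or its negation).

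Suppose for contradiction that $e \notin \candidates$. I would argue that flipping $e$ is \emph{immaterial}, i.e., $\intervene(\cex,\cause,\contingency)$ and $\intervene(\cex,\cause\setminus\{e\},\contingency)$ — which differ only in whether the input $a$ at position $n$ on the $t_i$-component is flipped — agree on the satisfaction of $\varphi$. The argument splits along the two construction steps of $\candidates$. \emph{Execution is unchanged:} because $e$ was not added in the unsatisfiable-core step, $c_{a,n,i}$ does not occur in the unsat core of $\psi_n = (\neg\delta_n)\wedge\psi_n^t$; hence $(\neg\delta_n)$ is already inconsistent with the remaining literals of $\psi_n^t$, so the step-$n$ transition does not depend on the value of $a$ at that step. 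Propagating this along the chains of copies of the counterfactual automata $T^C_{t_1},\dots,T^C_{t_k}$ (the two interventions coincide strictly before step $n$ and then take the same transition at step $n$), both produce the same states and outputs at every position. \emph{Formula evaluation is unchanged:} because $e$ was not added in the alternating-automaton step, the occurrence of $a$ at position $n$ on $t_i$ is not among the atomic subformulas annotated by the algorithm of~\cite{checkingfinitetraces} as affecting the satisfaction of $\violation = \neg\varphi'$; combined with the previous point (nothing else changes), flipping only that single input value cannot toggle membership in $\violation$. Hence $\intervene(\cex,\cause,\contingency) \models_{\traces(T)} \varphi$ iff $\intervene(\cex,\cause\setminus\{e\},\contingency) \models_{\traces(T)} \varphi$; the left side holds by CF, so the right side does too. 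If $\cause\setminus\{e\} \neq \emptyset$, then $\cause\setminus\{e\}$ satisfies SAT and CF, contradicting MIN. If $\cause\setminus\{e\} = \emptyset$, then $\intervene(\cex,\emptyset,\contingency) = \cex$ (with no input flips, the counterfactual automata reproduce $\cex$ and the contingency resets are no-ops), and $\cex$ violates $\varphi$ by assumption — again a contradiction. Thus $e \in \candidates$, and since $e$ was arbitrary, $\cause \subseteq \candidates$.

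The step I expect to be the main obstacle is \emph{formula evaluation is unchanged}: the alternating-automaton analysis of~\cite{checkingfinitetraces} is run on the counterexample $t''$ itself, whereas the contradiction concerns the intervened trace, which differs from $t''$. I would close this gap either by re-invoking the algorithm of~\cite{checkingfinitetraces} on the intervened trace (again a lasso-shaped trace) and noting that an input position never consulted by any accepting run cannot change acceptance, or by a direct monotonicity argument on runs of $\mathcal{A}_{\violation}$. A secondary subtlety lies in \emph{execution is unchanged}: one must check that flipping $e$ on top of the other events of $\cause$ at step $n$, and in the presence of the auxiliary contingency inputs from $\mathit{IC}$, still leaves the counterfactual transition $\delta^C$ unchanged; this holds because in Definition~\ref{def:CF_automaton} the $\mathit{IC}$-part of $\delta^C$ acts orthogonally to the $I$-part, and because the unsat-core property yields the transition predicate for every input valuation agreeing with the intervened one off $a$ — so the unsat-core step must be read as conservatively collecting every input on which the step-$n$ transition relation can depend, which is exactly what an over-approximation requires. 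Finally, one notes that the correspondence between events on $\cex$ and literals on the zipped trace $t''$ is a bijection, so ``$e \notin \candidates$'' transfers cleanly between the trace-level and LTL-level views.
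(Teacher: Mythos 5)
Your proposal follows essentially the same route as the paper's proof sketch: you argue that an event outside $\candidates$ is immaterial both to the executed transitions (via the unsatisfiable core) and to the evaluation of $\violation$ (via the alternating-automaton annotation), and then derive a contradiction with MIN. Your write-up is in fact somewhat more careful than the paper's sketch --- e.g., forcing $\cause'=\cause$ from MIN via downward closure of SAT, handling the singleton-cause edge case, and flagging that both analyses are performed on $t''$ rather than on the intervened trace --- but the underlying argument is the same.
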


\begin{proof}[sketch]
   Let $e= \langle l_a, n, t\rangle $ be an event such that $e$ is not in the unsatisfiable core of $\psi_n$ and does not directly affect the satisfaction of $\violation$ according to the alternating automata analysis. That is, the transition corresponding to $\psi_n^t$ is taken regardless of $e$, and thus all future events on $t$ remain the same regardless of the valuation of $e$. In addition, the valuation of the formula $\violation$ is the same regardless of $e$, since: (1) $e$ does not directly affect the satisfaction of $\violation$; (2) $e$ does not affect future events on $t$ (and obviously it does not affect past events). 
    Therefore, every set $\cause'$ such that $e\in \cause'$ is not minimal, and does not form a cause. Since the above is true for all events $e \not\in \cause$, it holds that $\cause\subseteq \candidates$ for every actual cause $\cause$. \qed
\end{proof}

\begin{proposition} \label{prop:CF}
The set $\candidates$ satisfies conditions SAT and CF. 
\end{proposition}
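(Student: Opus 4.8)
The plan is to establish the two conditions separately; SAT is bookkeeping about how $\candidates$ is built, whereas CF carries the actual content. For SAT I would argue directly from the construction in Sec.~\ref{sec:candidates}: every event placed into $\candidates$ is read off the counterexample. It is either a literal that lies in the unsatisfiable core of $\psi_n = (\neg \delta_n)\wedge\psi_n^t$ and also occurs in $\psi_n^t$ --- and $\psi_n^t$ is by construction the conjunction of literals describing \emph{exactly} the events present in $t''[n]$ --- or it is an annotated leaf of the accepting run of $\mathcal{A}_{\violation}$ on $t''$, i.e., a literal $a_{t_i}$ (resp.\ $\neg a_{t_i}$) that evaluates to true on $t''$. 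In both cases the polarity of the event agrees with the counterexample, so $\cex \models \langle l_a, n, t_i \rangle$ for each event added, hence $\cex \models \candidates$. Unwinding the notation of Sec.~\ref{sec:candidates} is all that is needed here.

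For CF I would not re-derive a counterfactual from scratch but reduce to the over-approximation guarantee. Condition CF merely asks for \emph{some} contingency $\contingency$ and \emph{some} non-empty $\cause'\subseteq\candidates$. Assuming the violation of $\varphi$ on $\cex$ admits an actual cause $\cause^\star$ at all, Prop.~\ref{prop:overapp} gives $\cause^\star\subseteq\candidates$. Since $\cause^\star$ is an actual cause it in particular satisfies condition CF of Def.~\ref{def:hyper_causality}, so there are a contingency $\contingency$ and a non-empty $\cause'\subseteq\cause^\star$ with $\cex\models\contingency$ and $\intervene(\cex,\cause',\contingency)\models_{\traces(T)}\varphi$. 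Because $\cause'\subseteq\cause^\star\subseteq\candidates$, the very same $\contingency$ and $\cause'$ witness CF for $\candidates$, and the required non-emptiness is inherited for free.

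The main obstacle, and the point I would flag explicitly, is the degenerate case in which the violation has no actual cause --- for instance when $\varphi$ is violated on \emph{every} trace of $T$, so that no intervention on any counterfactual automaton can repair it. Then $\candidates$ genuinely fails CF, and the statement has to be read as conditional on the existence of a cause; equivalently, $\candidates$ satisfies CF precisely when some actual cause exists, which is exactly what lets the subsequent minimization over subsets of $\candidates$ either return an actual cause or certify that none exists. I would therefore carry the standing assumption that $\cex$ admits an actual cause through the proof and note that the remaining case is detected by the subset search terminating without a CF-witness. A more self-contained alternative --- taking $\cause'=\candidates$ together with the contingency that pins the output leaves of the accepting run of $\mathcal{A}_{\violation}$ to their values in $t''$ --- invalidates that run once all its input leaves are flipped, but showing that no other accepting run of $\mathcal{A}_{\violation}$ survives on the intervened trace would require extra properties of the finite-trace model checking algorithm of~\cite{checkingfinitetraces} that the reduction above avoids, so I would not pursue it.
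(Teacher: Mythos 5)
Your proposal is correct and follows essentially the same route as the paper's own proof: SAT is discharged by observing that every event added to $\candidates$ is read off the counterexample, and CF is obtained by reduction to an actual cause $\cause\subseteq\candidates$ (via Prop.~\ref{prop:overapp}), whose witnessing subset $\cause'$ and contingency $\contingency$ serve equally as witnesses for $\candidates$. Your explicit flagging of the degenerate case where no actual cause exists---so that the proposition must be read as conditional on one existing---is a caveat the paper's proof leaves implicit but which is consistent with its remark that an actual cause may fail to exist.
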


\begin{proof}
 The condition SAT is satisfied as we add to $\candidates$ only events that indeed occur on the counterexample trace. 
For CF, consider that $\candidates$ is a super-set of the actual cause $\cause$, so the same contingency and counterfactual of $\cause$ will also apply for $\candidates$. This is since in order to compute counterfactual we are allowed to flip any subset of the events in $\cause$, and any such subset is also a subset of $\candidates$. In addition, in computing contingencies, we are allowed to flip any subset of outputs as long as they agree with the counterexample trace, which is independent in $\candidates$ and $\cause$. \qed
\end{proof}

\begin{algorithm}[t]

		\KwIn{Hyperproperty $\varphi$, counterexample $\cex$ violating $\varphi$, and a set of candidate causes $\candidates$ for which conditions SAT and CF hold.}
		\KwOut{A set of input events $\cause$ which is an actual cause for the violation.}
		
		\For{$i \in [1, \ldots, |\candidates|-1]$}{
		\For{$\cause \subset \candidates$ with $|\cause| = i$}{
		let $\cex^f = \intervene(\cex,\cause,\emptyset)$\;  \label{line:intervene}
		\uIf{$\cex^f \models\varphi$ }{
		    \Return{$\cause$\;}}
		    \uElse{
		    $\Tilde{\contingency} = \cont ( \varphi,  \cex, \cause ) $\;
		    	\If{$\Tilde{\contingency} \neq \emptyset$}{
                       \Return{$\cause$\;}
		        }%end if
		    }%end else
		
		}%end for subset
		}%end for i
		\Return{$\candidates$}\;
	
			\caption{$\ActualCause(\varphi, \cex, \candidates)$
			}
		
	\label{alg:causality}
\end{algorithm}	

\vspace{-3ex}

\subsection{Checking Actual Causality}\label{sec:checking_causality}
Due to Prop.~\ref{prop:overapp} we know that in order to find an actual cause, we only need to consider subsets of $\candidates$ as candidate causes. In addition, since $\candidates$ satisfies condition SAT, so do all of its subsets. We thus only need to check conditions CF and MIN for subsets of $\candidates$. 
Our actual causality computation, presented in Alg.~\ref{alg:causality} is as follows. We start with the set $\candidates$, that satisfies SAT and CF. 
We then check if there exists a more minimal cause that satisfies CF. 
This is done by iterating over all subsets $\cause'$ of $\candidates$, ordered by size and starting with the smallest ones, and checking if the counterfactual for the $\cause'$ manages to avoid the violation; and if not, if there exists a contingency for this $\cause'$. 
If the answer to one of these questions is yes, then $\cause'$ is a minimal cause that satisfies SAT, CF, and MIN, and thus we return $\cause'$ as our actual cause. 
We now elaborate on CF and MIN. 

\vspace{1ex}
\noindent
\emph{CF.} As we have mentioned above, checking condition CF is done in two stages -- checking for counterfactuals and computing contingencies. We first show that we do not need to consider all possible counterfactuals, but only one counterfactual for each candidate cause. 

\begin{proposition}
 In order to check if a candidate cause $\candidates$ is an actual cause it is enough to test the one counterfactual where all the events in $\candidates$ are flipped.
\end{proposition}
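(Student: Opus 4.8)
The plan is to exploit the existential quantification over subsets that is built into condition CF, together with the minimality requirement MIN. Recall that CF for a set $\cause$ asks for a non-empty subset $\cause' \subseteq \cause$ and a contingency $\contingency$ with $\cex \models \contingency$ and $\intervene(\cex,\cause',\contingency) \models_{\traces(T)} \varphi$. I would reformulate the proposition as the claim that, for deciding whether $\cause$ is an actual cause, CF may be replaced by the restricted condition $\mathrm{CF}'$: \emph{there is a contingency $\contingency$ with $\cex \models \contingency$ and $\intervene(\cex,\cause,\contingency) \models_{\traces(T)} \varphi$} --- i.e., the witness subset is forced to be $\cause$ itself, while the contingency is still allowed to vary, which is exactly what the call to \cont in Alg.~\ref{alg:causality} does. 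Concretely, I would prove that $\cause$ satisfies SAT, CF and MIN iff $\cause$ satisfies SAT, $\mathrm{CF}'$ and MIN. Throughout I may assume $\cause \neq \emptyset$, since CF (and hence being an actual cause) already requires a non-empty subset.

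The direction $\mathrm{CF}' \Rightarrow \mathrm{CF}$ is immediate: $\cause$ is a non-empty subset of itself. For the converse I would argue by contradiction. Suppose $\cause$ satisfies SAT, CF and MIN, but not $\mathrm{CF}'$. From CF, fix a witnessing non-empty $\cause' \subseteq \cause$ and a contingency $\contingency$ with $\cex \models \contingency$ and $\intervene(\cex,\cause',\contingency) \models_{\traces(T)} \varphi$; since $\mathrm{CF}'$ fails we must have $\cause' \subsetneq \cause$. I then observe that $\cause'$ itself satisfies SAT and CF: SAT holds because $\cex \models \cause$ and $\cause' \subseteq \cause$ give $\cex \models \cause'$; and CF holds for $\cause'$ by using the witness subset $\cause'$ (a non-empty subset of $\cause'$) together with the \emph{same} contingency $\contingency$, since once $\cause'$ and $\contingency$ are fixed neither $\intervene(\cex,\cause',\contingency)$ nor the side condition $\cex \models \contingency$ mentions the surrounding set $\cause$ at all, so the statement $\intervene(\cex,\cause',\contingency) \models_{\traces(T)} \varphi$ transfers verbatim. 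Thus $\cause' \subsetneq \cause$ is a strictly smaller set satisfying SAT and CF, contradicting MIN for $\cause$. Hence $\cause$ must satisfy $\mathrm{CF}'$.

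Combining the two directions yields the proposition: to test whether a candidate $\cause$ is an actual cause it is enough to check SAT (which $\cause$ inherits as a subset of $\candidates$), then the single counterfactual $\intervene(\cex,\cause,\contingency)$ in which every event of $\cause$ is flipped --- first with the empty contingency and, failing that, with a contingency produced by \cont --- and finally MIN; no interventions with a proper subset $\cause' \subsetneq \cause$ need to be examined, which is precisely the behaviour of Alg.~\ref{alg:causality}. I do not anticipate a real obstacle; the only points that need a moment of care are the non-emptiness side condition of CF (discharged because candidate causes are non-empty) and the fact that the contingency witnessing CF for $\cause$ via $\cause'$ transfers unchanged to $\cause'$ as its own CF-witness, which makes the argument essentially syntactic.
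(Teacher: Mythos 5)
Your proof is correct and follows essentially the same route as the paper's: both rest on the observation that any proper-subset witness $\cause'$ for CF would itself satisfy SAT and CF, and hence is either caught by the minimality check (contradicting MIN) or found earlier in the size-ordered enumeration of subsets. Your version merely makes the paper's two-sentence argument precise by phrasing it as the equivalence of CF and the restricted condition $\mathrm{CF}'$ in the presence of MIN.
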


\begin{proof}
    Assume that there is a strict subset $\cause$ of $\candidates$ such that we only need to flip the valuations of events in $\cause$ in order to find a counterfactual or contingency, thus $\cause$ satisfies CF. Since $\cause$ is a more minimal cause than $\candidates$, we will find it during the minimality check. \qed
\end{proof}

We assume that CF holds for the input set $\candidates$ and check if it holds for any smaller subset $\cause \subset \candidates$. 
CF holds for $\cause$ 
if (1) flipping all events in $\cause$ is enough to avoid the violation of $\varphi$ 
or if (2) there exists a non-empty set of contingencies for $\cause$ that ensures that $\varphi$ is not violated. 
The computation of contingencies is described in Alg.~\ref{alg:contingencies}. Verifying condition CF involves model checking traces against an LTL formula, as we check in Alg.~\ref{alg:causality} (line 3) if the property $\varphi$ is still violated on the counterfactual trace with the empty contingency, and on the counterfactual traces resulting from the different contingency sets we consider in Alg.~\ref{alg:contingencies} (line~7). In both scenarios, we apply finite trace model checking, as described at the beginning of Sec.~\ref{sec:computing-semantic-explanations} (as we assume a lasso-shaped representation of our traces).  

\begin{algorithm}[t]

		\KwIn{Hyperproperty $\varphi$, a counterexample $\cex$ and a potential cause $\cause$.}
		\KwOut{a set of output events $\contingency$ which is a contingency for $\varphi, \cex$ and $\cause$, or $\emptyset$ if no contingency found.}

        \textbf{let} $t''$ be the zipped trace of $\cex$,
        $\varphi'$ be the LTL formula obtained from $\varphi$, and $\violation = \neg \varphi'$\;
        \textbf{let} $\aut{A}_{\violation}$ be the alternating automaton for $\violation$\;
       
     \textbf{let} $t^f$ be the counterfactual trace obtained from $t''$ by flipping all  events in $\cause$\;
  
         \textbf{let} $\mathcal{N}$ be the sets of events derived from the annotated run tree of $\mathcal{A}_{\violation}$ on $t^f$\label{line:annotate}\;

   \textbf{let} $\aut{O}' := \{\langle l_{a_{t}}, n, t''  \rangle \in OE ~|~ a_t \in t''[n] \leftrightarrow  a_t\notin t^f[n]\} $\; \label{line:events}
 
        \For{every subset $\contingency'\subseteq (\aut{N} \cap \aut{O}')$, and then for every other subset $\contingency' \subseteq \aut{O}'$ }{
        
         $t^m := \intervene (t'', \cause, \contingency')$\; 
            \If{$t^m \models \varphi'$}{\Return{$\contingency'$}\; \label{line:annotated_return}}
        }%end for
        
        \Return{$\emptyset$\;}

		\caption{$\cont ( \varphi,  \cex, \cause )$}
		
	\label{alg:contingencies}
\end{algorithm}	

\vspace{1ex}
\noindent
\emph{MIN.} 
To check if $\candidates$ is minimal, we need to check if there exists a subset of $\candidates$ that satisfies CF. 
We check CF for all subsets, starting with the smallest one, and report the first subset that satisfies CF as our actual cause. 
(Note that we already established that $\candidates$ and all of its subsets satisfy SAT.)

\subsection{Computing Contingencies}\label{sec:contingencies}

Recall that the role of contingencies is to eliminate the effect of other possible causes from the counterfactual world, in case these causes did not affect the violation in the actual world. More formally, in computing contingencies we look for a set $\contingency$ of output events such that changing these outputs from their value in the counterfactual to their value in the counterexample $t''$ results in avoiding the violation. Note that the inputs remain as they are in the counterfactual. 
We note that the problem of finding contingencies is hard, and in general is equivalent to the problem of model checking. This is since we need to consider all traces that are the result of changing some subset of events (output + time step) from the counterfactual back to the counterexample, and to check if there exists a trace in this set that avoids the violation.
Unfortunately, we are unable to avoid an exponential complexity in the size of the original system, in the worst case. However, our experiments show that in practice, most cases do not require the use of contingencies. 

Our algorithm for computing contingencies (Alg.~\ref{alg:contingencies}) works as follows. 
Let $t^f$ be the counterfactual trace. 
As a first step, we use the annotated run tree $\aut{T}$ of the alternating automaton $\aut{A}_{\violation}$ on $t^f$ to detect output events that appear in $\violation$ and take part in satisfying $\violation$. Subsets of these output events are our first candidates for contingencies as they are directly related to the violation (Alg.~\ref{alg:contingencies} lines~\ref{line:annotate}-\ref{line:annotated_return}). If we were not able to find a contingency, we continue to check all possible subsets of output events that differ from the original counterexample trace. 
We 
test the different outputs by feeding the \CF\ automaton of Def.~\ref{def:CF_automaton} with additional inputs from the set $I^C$. The resulted trace is then our candidate contingency, which we try to verify against $\varphi$. 
The number of different input sequences is bounded by the size of the product of the \CF\ automaton and the automaton for $\violation$, and thus the process terminates.

\begin{theorem}[Correctness]
    Our algorithm is sound and complete. That is, 
    let $\cex$ be a counterexample with a finite representation to a $\forall^n$-HyperLTL formula $\psi$. Then, our algorithm returns an actual cause for the violation, if such exists. 
\end{theorem}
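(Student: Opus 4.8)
The plan is to establish soundness and completeness separately, in each case tracking the correspondence between the objects manipulated by the algorithm (over the zipped LTL trace $t''$ and the LTL violation $\violation$) and the HyperLTL-level definitions of SAT, CF, and MIN from Def.~\ref{def:hyper_causality}. The key reusable facts are: (i) Prop.~\ref{prop:overapp}, which guarantees every actual cause $\cause$ satisfies $\cause \subseteq \candidates$; (ii) Prop.~\ref{prop:CF}, which guarantees $\candidates$ itself satisfies SAT and CF; and (iii) the correctness of the finite-trace model checking black box of~\cite{checkingfinitetraces}, which is what both $\ActualCause$ (line~\ref{line:intervene}ff.) and $\cont$ (line~7ff.) invoke to decide $\cex^f \models \varphi$ and $t^m \models \varphi'$. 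I would first note that the HyperLTL-to-LTL reduction and the zipping of $\cex$ into $t''$ are faithful (this is stated in Sec.~\ref{sec:computing-semantic-explanations}, relying on the synchronous semantics of HyperLTL), so that checking $\violation$ on the zipped counterfactual trace is equivalent to checking $\lnot\varphi$ on the intervened tuple of traces, and similarly that $\intervene$ on $t''$ corresponds to the tuple-level $\intervene$ composed with the counterfactual automata $T^C_{\pi_i}$.

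\emph{Soundness.} Suppose $\ActualCause(\varphi,\cex,\candidates)$ returns some set $\cause$. I would argue it satisfies all three conditions. SAT holds because $\cause \subseteq \candidates$ and $\candidates$ satisfies SAT (Prop.~\ref{prop:CF}), and SAT is closed under taking subsets (every event of $\cause$ still occurs on $\cex$). CF holds by case analysis on which return statement fired: if $\cex^f = \intervene(\cex,\cause,\emptyset) \models \varphi$, then CF is witnessed by $\cause' = \cause$ and $\contingency = \emptyset$; if instead $\cont(\varphi,\cex,\cause)$ returned a non-empty $\tilde{\contingency}$, then by the loop invariant of Alg.~\ref{alg:contingencies} we have $\intervene(t'',\cause,\tilde{\contingency}) \models \varphi'$, which lifts to $\intervene(\cex,\cause,\tilde{\contingency}) \models_{\traces(T)} \varphi$, and additionally $\cex \models \tilde{\contingency}$ since every event in $\tilde{\contingency}$ is drawn from $\aut{O}'$, i.e.\ agrees with $t''$ by construction (line~\ref{line:events}); so CF holds with $\cause' = \cause$ and $\contingency = \tilde{\contingency}$. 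Here I would also remark that flipping \emph{all} of $\cause$ rather than an arbitrary subset $\cause'$ is without loss of generality for the \emph{returned} set, by the proposition immediately preceding Sec.~\ref{sec:contingencies}: any strictly smaller $\cause' \subsetneq \cause$ that would suffice is caught earlier by the size-ordered iteration. Finally, MIN holds because the outer loop of Alg.~\ref{alg:causality} enumerates subsets in non-decreasing size order and returns the \emph{first} one satisfying CF; hence no proper subset of the returned $\cause$ satisfies CF (and all satisfy SAT), so no proper subset is an actual cause. The fallback return of $\candidates$ itself is sound for the same reasons: it satisfies SAT and CF by Prop.~\ref{prop:CF}, and it is reached only when no proper subset satisfied CF, so it is minimal.

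\emph{Completeness.} Suppose an actual cause exists; I must show the algorithm returns one (not necessarily the same one, since actual causes are not unique). Let $\cause^\star$ be an actual cause of minimal cardinality among all actual causes. By Prop.~\ref{prop:overapp}, $\cause^\star \subseteq \candidates$, so $\cause^\star$ is among the subsets enumerated by Alg.~\ref{alg:causality}. By CF for $\cause^\star$ there exist a witness subset and a contingency; invoking the ``flip-all'' proposition again, flipping all of $\cause^\star$ together with some contingency also witnesses CF for $\cause^\star$ (any smaller witnessing subset is itself an actual-cause candidate of size $\le |\cause^\star|$, contradicting minimality unless it equals $\cause^\star$). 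The main obstacle, and the step I would spend the most care on, is showing that $\cont(\varphi,\cex,\cause^\star)$ is \emph{complete}: it must find \emph{some} contingency whenever one exists. This requires arguing that the set $\aut{O}'$ of output events on which $t^f$ differs from $t''$ is exactly the set of outputs a contingency is permitted to reset — a contingency can only reset an output to its counterexample value, and doing so has no effect where $t^f$ already agrees with $t''$ — so that enumerating all subsets of $\aut{O}'$ (first those intersecting the annotation set $\aut{N}$, then the rest) exhausts all contingencies up to behavioral equivalence on the counterfactual automaton. One must also confirm the process terminates, which follows since $\aut{O}'$ is finite (lasso representation) and the product of the \CF\ automaton with $\aut{A}_{\violation}$ is finite. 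Granting this, when the outer loop reaches $\cause^\star$ (or reaches it on an even earlier, hence smaller, subset that also satisfies CF), it returns an actual cause; and if it never returns early it returns $\candidates$, but in the present case a proper subset $\cause^\star$ satisfies CF, so an early return does occur. The only remaining subtlety is that a subset returned \emph{before} reaching $\cause^\star$ is still a genuine actual cause — but that is exactly the soundness argument above applied to that subset, so completeness reduces to soundness plus the guarantee that the enumeration does eventually hit a CF-satisfying set, which $\candidates$ itself provides as a last resort. \qed
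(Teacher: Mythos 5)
Your proposal is correct and follows essentially the same route as the paper's own (much terser) proof: soundness because every returned set is explicitly verified against SAT, CF, and MIN via the size-ordered enumeration, and completeness because Prop.~\ref{prop:overapp} confines every actual cause to the finite set $\candidates$, all of whose subsets are tested in the worst case. You additionally spell out details the paper leaves implicit --- notably the faithfulness of the HyperLTL-to-LTL zipping and the completeness of the contingency enumeration over $\aut{O}'$ --- which strengthens rather than changes the argument.
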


\begin{proof}
\emph{Soundness}. Since we verify each candidate set of inputs according to the conditions SAT, CF and MIN, it holds that every output of our algorithm is indeed an actual cause.
\emph{Completeness}. If there exists a cause, then due
to Prop.~\ref{prop:overapp}, it is a subset of the finite set $\candidates$. 
Since in the worst case
we test every subset of $\candidates$, if there exists a cause we will eventually find it. 
\qed
\end{proof}

\section{Implementation and Experiments}
\label{sec:experiments}
% !TeX root = ../main.tex
We implemented Alg.~\ref{alg:causality} and evaluated it on publicly available example instances of \hypervis~\cite{DBLP:journals/tvcg/HorakCMHFMDFD22}, for which their state graphs were available. In the following, we provide implementation details, report on the running time and show the usefulness of the implementation by comparing to the highlighting output of \hypervis. Our implementation is written in Python and uses py-aiger~\cite{py-aiger} and Spot~\cite{DBLP:conf/atva/Duret-LutzLFMRX16}. We compute the candidate cause according to Sec.~\ref{sec:candidates} with py-sat~\cite{imms-sat18}, using Glucose~4~\cite{DBLP:conf/ijcai/AudemardS09,sorensson2010minisat}, building on Minisat~\cite{sorensson2010minisat}.
We ran experiments on a MacBook Pro with a $3,3$ GHz Dual-Core Intel Core i7 processor and $16$ GB RAM\footnote{Our prototype implementation and the experimental data are both available at: \url{https://github.com/reactive-systems/explaining-hyperproperty-violations}}.

\vspace{1ex}
\noindent
\emph{Experimental results.}
\begin{table}[t]
    \caption{Experimental results of our implementation. Times are given in ms.}
	\label{table:results}
	\centering
	\begin{tabular}{*9c}
		\toprule
		\texttt{Instance} & ~$|\cex|$~ & ~$|\varphi|$~ & ~time($\candidates$)~ &  ~$\candidates$~ & ~$\#(\cause)$~ & ~time($\forall\cause$)\\
		\midrule
		\texttt{Running example} & $10$ & $9$ & $19$ & \tiny{$\neg \mathit{hi}^0_{t_1},\mathit{hi}^0_{t_2}$}& $2$ & $55$\\
		\texttt{Security in \& out} & $35$ & $19$ & $292$ &\tiny{$\mathit{hi}^2_{t_1},\neg\mathit{hi}^0_{t_1},\neg\mathit{hi}^3_{t_1},\neg\mathit{hi}^1_{t_1}$}& $8$ & $798$\\
		~  & ~ & ~ & ~ &\tiny{$\mathit{hi}^2_{t_2},\mathit{hi}^0_{t_2},\mathit{hi}^1_{t_2},\mathit{hi}^3_{t_2}$}& ~ & ~\\
		\texttt{Drone example 1} & $24$ & $19$ & $33$ & \tiny{$\mathit{bound}^2_{t_1}, \neg \mathit{bound}^1_{t_1},\mathit{up}^1_{t_1}, \neg\mathit{up}^2_{t_1}$}&$5$&$367$ \\
		~ & ~ & ~ & ~ & \tiny{$\mathit{bound}^2_{t_2}, \neg \mathit{bound}^1_{t_2},\neg \mathit{up}^1_{t_2}$}&~&~ \\
		\texttt{Drone example 2} & $18$ & $36$ & $31$ & \tiny{$\mathit{bound}^1_{t_1},\neg \mathit{bound}^1_{t_2}, \mathit{up}^1_{t_2}$} & $3$& $256$\\
		\texttt{Asymmetric arbiter '19} &  $28$ & $35$ & $53$ & see Appendix~\ref{app:candidates} &$10$ & $490$ \\
		\texttt{Asymmetric arbiter} & $72$ & $35$ & $70$ & see Appendix~\ref{app:candidates} & $24$ & $1480$\\
		\bottomrule
	\end{tabular}
\end{table}
The results of our experimental evaluation can be found in Tab.~\ref{table:results}. We report on the size of the analyzed counterexample $|\cex|$, the size of the violated formula $|\varphi|$, how long it took to compute the first, over-approximated cause (see~time($\candidates$)) and state the approximation $\candidates$ itself, the number of computed minimal causes $\#(\cause)$ and the time it took to compute all of them (see~time($\forall\cause$)). The \texttt{Running Example} is described in Sec.~\ref{sec:running_ex}, the instance \texttt{Security in \& out} refers to a system which leaks high security input by not satisfying a noninterference property, the \texttt{Drone examples} consider a leader-follower drone scenario, and the \texttt{Asymmetric Arbiter} instances refer to arbiter implementations that do not satisfy a symmetry constraint.
Specifications can be found in App.~\ref{app:specs}.

Our first observation is that the cause candidate $\candidates$ can be efficiently computed thanks to the iterative computation of unsatisfiable cores~(Sec.~\ref{sec:candidates}). The cause candidate provides a tight over-approximation of possible minimal causes.
As expected, the runtime for finding minimal causes increases for larger counterexamples. However, as our experiments show, the overhead is manageable, because we optimize the search for all minimal causes by only considering every subset in $\candidates$ instead of naively going over every combination of input events (see Prop.~\ref{prop:overapp}).
Compared to the computationally heavy task of model checking to get a counterexample, our approach incurs little additional cost, which matches our theoretical results~(see Prop.~\ref{cc_as_mc}). During our experiments, we have found that computing the candidate $\candidates$ first has, additionally to providing a powerful heuristic, another benefit: Even when the computation of minimal causes becomes increasingly expensive, $\candidates$ can serve as an intermediate result for the user.
By filtering for important inputs, such as high security inputs, $\candidates$ already gives great insight to why the property was violated.
In the asymmetric arbiter instance, for example, the input events $\langle \neg \mathit{tb\_secret},3,t_0\rangle$ and $\langle \mathit{tb\_secret},3,t_1\rangle$ of $\candidates$, which cause the violation, immediately catch the eye (c.f App.~\ref{app:candidates}).

\vspace{1ex}
\noindent
\emph{Comparison to \hypervis.} \hypervis \cite{DBLP:journals/tvcg/HorakCMHFMDFD22} is a tool for visualizing counterexamples returned from the HyperLTL model checker MCHyper \cite{DBLP:conf/cav/FinkbeinerRS15}. It highlights the events in the trace that it considers responsible for the violation based on the formula and the set of traces, without considering the system model.
However, violations of many relevant security policies such as observational determinism are not caused by events whose atomic propositions appear in the formula, as can be seen in our running example (see Sec.~\ref{sec:running_ex} and Ex.~\ref{ex:hyper_causality}).
When running the highlight function of \hypervis for the counterexample traces $t_1,t_2$ on \texttt{Running example}, the output events $\langle lo, 1, t_1 \rangle$ and $\langle \neg lo,1,t_2\rangle$ will be highlighted, neglecting the decisive high security input $hi$. Using our method additionally reveals the input events $\langle \neg hi, 0, t_1 \rangle$ and $\langle hi,0,t_2\rangle$, i.e., an actual cause (see Tab.~\ref{table:results}). This pattern can be observed throughout all considered instances in our experiments. For instance in the \texttt{Asymmetric arbiter} instance mentioned above, the input events causing the violation also do not occur in the formula (see App.~\ref{app:specs}) and thus \hypervis does not highlight this important cause for the violation.

\section{Related Work}
With the introduction of HyperLTL and HyperCTL$^*$~\cite{DBLP:conf/post/ClarksonFKMRS14}, temporal hyperproperties have been studied extensively:
satisfiability~\cite{conf/concur/FinkbeinerH16,fortin2021hyperltl,mascle2019keys}, model checking~\cite{DBLP:conf/cav/FinkbeinerRS15,hsu2020bounded,finkbeiner2017verifying}, program repair~\cite{DBLP:conf/atva/BonakdarpourF19}, monitoring~\cite{monitoring_hyperproperties_journal,DBLP:conf/csfw/BonakdarpourF18,stucki2019gray,DBLP:conf/csfw/AgrawalB16}, synthesis~\cite{finkbeiner2020synthesis}, and expressiveness studies~\cite{DBLP:conf/lics/CoenenFHH19,krebs2018team,conf/stacs/Finkbeiner017}. Causal analysis of hyperproperties has been studied theoretically based on counterfactual builders~\cite{DBLP:journals/tcs/GosslerS20} instead of actual causality, as in our work. 
Explanation methods~\cite{BaierDFJMPZ21} exist for trace properties~\cite{DBLP:conf/atva/WangYIG06,DBLP:conf/popl/BallNR03,GosslerM13,GroceCKS06,DBLP:conf/cav/GroceKL04}, integrated in several model checkers~\cite{DBLP:conf/tacas/ClarkeKL04,DBLP:journals/tse/ChakiCGJV04,DBLP:conf/sigsoft/ChakiGS04}.
Minimization~\cite{39e69c70fe784b7c99c80df46053a0f4} has been studied, as well as analyzing several system traces together~\cite{DBLP:conf/spin/GroceV03,DBLP:conf/tacas/SchuppanB05,DBLP:conf/hase/BochotVWW10}. 
There exists work in explaining counterexamples for function block diagrams~\cite{DBLP:journals/acj/JeeJCKYPS10,DBLP:conf/indin/PakonenBV18}.
MODCHK uses a causality analysis~\cite{DBLP:conf/cav/BeerBCOT09} returning an over-approximation, while we provide minimal causes.
Lastly, there are approaches which define actual causes for the violation of a trace property  using Event Order Logic~\cite{CaltaisGL18,Leitner-FischerL13a,Leitner-FischerL13b}.

\section{Conclusion}
We present an explanation method for counterexamples to hyperproperties described by HyperLTL formulas. We lift Halpern and Pearl's definition of actual causality to effects described by hyperproperties and counterexamples given as sets of traces. Like the definition that inspired us, we allow modifications of the system dynamics in the counterfactual world through contingencies, and define these possible counterfactual behaviors in an automata-theoretic approach. The evaluation of our prototype implementation shows that our method is practically applicable and significantly improves the state-of-the-art in explaining counterexamples returned by a HyperLTL model checker.

\bibliographystyle{splncs04}
\bibliography{bibliography}

\begin{thebibliography}{10}
\providecommand{\url}[1]{\texttt{#1}}
\providecommand{\urlprefix}{URL }
\providecommand{\doi}[1]{https://doi.org/#1}

\bibitem{log4j}
Apache log4j security vulnerabilities,
  \url{https://logging.apache.org/log4j/2.x/security.html}

\bibitem{DBLP:conf/csfw/AgrawalB16}
Agrawal, S., Bonakdarpour, B.: Runtime verification of k-safety hyperproperties
  in hyperltl. In: {CSF} 2016. \doi{10.1109/CSF.2016.24}

\bibitem{DBLP:conf/ijcai/AudemardS09}
Audemard, G., Simon, L.: Predicting learnt clauses quality in modern {SAT}
  solvers. In: {IJCAI} 2009.
  \url{http://ijcai.org/Proceedings/09/Papers/074.pdf}

\bibitem{BaierDFJMPZ21}
Baier, C., Dubslaff, C., Funke, F., Jantsch, S., Majumdar, R., Piribauer, J.,
  Ziemek, R.: {From Verification to Causality-Based Explications}. In: ICALP
  2021

\bibitem{DBLP:conf/popl/BallNR03}
Ball, T., Naik, M., Rajamani, S.K.: From symptom to cause: localizing errors in
  counterexample traces. In: POPL 2003. \doi{10.1145/604131.604140}

\bibitem{DBLP:journals/mscs/BartheDR11}
Barthe, G., D'Argenio, P.R., Rezk, T.: Secure information flow by
  self-composition. Math. Struct. Comput. Sci.  \textbf{21}(6),  1207--1252
  (2011)

\bibitem{DBLP:conf/cav/BeerBCOT09}
Beer, I., Ben{-}David, S., Chockler, H., Orni, A., Trefler, R.J.: Explaining
  counterexamples using causality. In: Computer Aided Verification, {CAV} 2009

\bibitem{Biere-FMV-TR-07-1}
Biere, A.: The {AIGER And-Inverter Graph (AIG)} format version 20071012. Tech.
  Rep. Report 07/1, Johannes Kepler University (2007)

\bibitem{DBLP:conf/hase/BochotVWW10}
Bochot, T., Virelizier, P., Waeselynck, H., Wiels, V.: Paths to property
  violation: {A} structural approach for analyzing counter-examples. In: {HASE}
  2010

\bibitem{DBLP:conf/csfw/BonakdarpourF18}
Bonakdarpour, B., Finkbeiner, B.: The complexity of monitoring hyperproperties.
  In: {CSF} 2018. \doi{10.1109/CSF.2018.00019}

\bibitem{DBLP:conf/atva/BonakdarpourF19}
Bonakdarpour, B., Finkbeiner, B.: Program repair for hyperproperties. In:
  {ATVA} 2019. \doi{10.1007/978-3-030-31784-3\_25}

\bibitem{10.1007/978-3-642-14295-6_5}
Brayton, R., Mishchenko, A.: Abc: An academic industrial-strength verification
  tool. In: {CAV} 2010. \doi{10.1007/978-3-642-14295-6\_5}

\bibitem{CaltaisGL18}
Caltais, G., Guetlein, S.L., Leue, S.: Causality for general ltl-definable
  properties. In: CREST@ETAPS 2018. \doi{10.4204/EPTCS.286.1}

\bibitem{DBLP:journals/tse/ChakiCGJV04}
Chaki, S., Clarke, E.M., Groce, A., Jha, S., Veith, H.: Modular verification of
  software components in {C}. {IEEE} Trans. Software Eng.  \textbf{30}(6),
  388--402 (2004)

\bibitem{DBLP:conf/sigsoft/ChakiGS04}
Chaki, S., Groce, A., Strichman, O.: Explaining abstract counterexamples. In:
  {ACM} {SIGSOFT} Fnd. of Soft. Eng. 2004. \doi{10.1145/1029894.1029908}

\bibitem{DBLP:journals/tocl/ChocklerHK08}
Chockler, H., Halpern, J.Y., Kupferman, O.: What causes a system to satisfy a
  specification? {ACM} Trans. Comput. Log.  \textbf{9}(3),  20:1--20:26 (2008)

\bibitem{DBLP:journals/fmsd/ClarkeBRZ01}
Clarke, E.M., Biere, A., Raimi, R., Zhu, Y.: Bounded model checking using
  satisfiability solving. Formal Methods Syst. Des.  \textbf{19}(1),  7--34
  (2001)

\bibitem{DBLP:conf/lop/ClarkeE81}
Clarke, E.M., Emerson, E.A.: Design and synthesis of synchronization skeletons
  using branching-time temporal logic. In: Logics of Programs, Workshop,
  Yorktown Heights, New York, USA, May 1981. \doi{10.1007/BFb0025774}

\bibitem{DBLP:conf/tacas/ClarkeKL04}
Clarke, E.M., Kroening, D., Lerda, F.: A tool for checking {ANSI-C} programs.
  In: {TACAS} 2004. \doi{10.1007/978-3-540-24730-2\_15}

\bibitem{DBLP:conf/post/ClarksonFKMRS14}
Clarkson, M.R., Finkbeiner, B., Koleini, M., Micinski, K.K., Rabe, M.N.,
  S{\'{a}}nchez, C.: Temporal logics for hyperproperties. In: {POST} 2014

\bibitem{Hyperproperties}
Clarkson, M.R., Schneider, F.B.: Hyperproperties. J. Comput. Secur.
  \textbf{18}(6),  1157--1210 (2010). \doi{10.3233/JCS-2009-0393}

\bibitem{DBLP:conf/lics/CoenenFHH19}
Coenen, N., Finkbeiner, B., Hahn, C., Hofmann, J.: The hierarchy of
  hyperlogics. In: {LICS} 2019. \doi{10.1109/LICS.2019.8785713}

\bibitem{DBLP:conf/cav/CoenenFST19}
Coenen, N., Finkbeiner, B., S{\'{a}}nchez, C., Tentrup, L.: Verifying
  hyperliveness. In: {CAV} 2019. \doi{10.1007/978-3-030-25540-4\_7}

\bibitem{softwaredoping}
D'Argenio, P.R., Barthe, G., Biewer, S., Finkbeiner, B., Hermanns, H.: Is your
  software on dope? In: ESOP 2017

\bibitem{DBLP:conf/cav/DehnertJK017}
Dehnert, C., Junges, S., Katoen, J., Volk, M.: A storm is coming: {A} modern
  probabilistic model checker. In: Computer Aided Verification, {CAV} 2017

\bibitem{DBLP:conf/atva/Duret-LutzLFMRX16}
Duret{-}Lutz, A., Lewkowicz, A., Fauchille, A., Michaud, T., Renault, E., Xu,
  L.: Spot 2.0 --- a framework for ltl and $\omega $-automata manipulation. In:
  {ATVA} 2016

\bibitem{durumeric2014matter}
Durumeric, Z., Li, F., Kasten, J., Amann, J., Beekman, J., Payer, M., Weaver,
  N., Adrian, D., Paxson, V., Bailey, M., et~al.: The matter of heartbleed. In:
  {IMC} 2014

\bibitem{conf/concur/FinkbeinerH16}
Finkbeiner, B., Hahn, C.: Deciding hyperproperties. In: {CONCUR} 2016

\bibitem{finkbeiner2020synthesis}
Finkbeiner, B., Hahn, C., Lukert, P., Stenger, M., Tentrup, L.: Synthesis from
  hyperproperties. Acta Informatica  \textbf{57}(1-2),  137--163 (2020)

\bibitem{RVHyper}
Finkbeiner, B., Hahn, C., Stenger, M., Tentrup, L.: Rvhyper: {A} runtime
  verification tool for temporal hyperproperties. In: {TACAS} 2018

\bibitem{monitoring_hyperproperties_journal}
Finkbeiner, B., Hahn, C., Stenger, M., Tentrup, L.: Monitoring hyperproperties.
  Formal Methods Syst. Des.  \textbf{54}(3),  336--363 (2019)

\bibitem{finkbeiner2018model}
Finkbeiner, B., Hahn, C., Torfah, H.: Model checking quantitative
  hyperproperties. In: {CAV} 2018. \doi{10.1007/978-3-319-96145-3\_8}

\bibitem{finkbeiner2017verifying}
Finkbeiner, B., M{\"{u}}ller, C., Seidl, H., Zalinescu, E.: Verifying security
  policies in multi-agent workflows with loops. In: {CCS} 2017

\bibitem{DBLP:conf/cav/FinkbeinerRS15}
Finkbeiner, B., Rabe, M.N., S{\'{a}}nchez, C.: Algorithms for model checking
  hyperltl and hyperctl$^*$. In: {CAV} 2015. \doi{10.1007/978-3-319-21690-4\_3}

\bibitem{checkingfinitetraces}
Finkbeiner, B., Sipma, H.: Checking finite traces using alternating automata.
  Formal Methods Syst. Des.  \textbf{24}(2),  101--127 (2004)

\bibitem{conf/stacs/Finkbeiner017}
Finkbeiner, B., Zimmermann, M.: The first-order logic of hyperproperties. In:
  {STACS} 2017. \doi{10.4230/LIPIcs.STACS.2017.30}

\bibitem{fortin2021hyperltl}
Fortin, M., Kuijer, L.B., Totzke, P., Zimmermann, M.: {HyperLTL} satisfiability
  is {$\Sigma_1^1$}-complete, {HyperCTL$^*$} satisfiability is
  {$\Sigma_1^2$}-complete. In: {MFCS} 2021

\bibitem{GosslerM13}
G{\"{o}}{\ss}ler, G., M{\'{e}}tayer, D.L.: A general trace-based framework of
  logical causality. In: {FACS} 2013. \doi{10.1007/978-3-319-07602-7\_11}

\bibitem{DBLP:journals/tcs/GosslerS20}
G{\"{o}}ssler, G., Stefani, J.: Causality analysis and fault ascription in
  component-based systems. Theor. Comput. Sci.  \textbf{837},  158--180 (2020)

\bibitem{GroceCKS06}
Groce, A., Chaki, S., Kroening, D., Strichman, O.: Error explanation with
  distance metrics. Int. J. Softw. Tools Technol. Transf.  \textbf{8}(3),
  229--247 (2006)

\bibitem{DBLP:conf/cav/GroceKL04}
Groce, A., Kroening, D., Lerda, F.: Understanding counterexamples with explain.
  In: {CAV} 2004. \doi{10.1007/978-3-540-27813-9\_35}

\bibitem{DBLP:conf/spin/GroceV03}
Groce, A., Visser, W.: What went wrong: Explaining counterexamples. In: {SPIN}
  2003. \doi{10.1007/3-540-44829-2\_8}

\bibitem{Halpern15}
Halpern, J.Y.: A modification of the halpern-pearl definition of causality. In:
  {IJCAI} 2015. \url{http://ijcai.org/Abstract/15/427}

\bibitem{HalpernPearl05a}
Halpern, J.Y., Pearl, J.: Causes and explanations: A structural-model approach.
  part i: Causes. The British Journal for the Philosophy of Science
  \textbf{56}(4),  843--887 (2005), \url{http://www.jstor.org/stable/3541870}

\bibitem{HalpernPearl05b}
Halpern, J.Y., Pearl, J.: Causes and explanations: A structural-model approach.
  part ii: Explanations. The British Journal for the Philosophy of Science
  \textbf{56}(4),  889--911 (2005), \url{http://www.jstor.org/stable/3541871}

\bibitem{DBLP:journals/tse/Holzmann97}
Holzmann, G.J.: The model checker {SPIN}. {IEEE} Trans. Software Eng.
  \textbf{23}(5),  279--295 (1997). \doi{10.1109/32.588521}

\bibitem{DBLP:journals/tvcg/HorakCMHFMDFD22}
Horak, T., Coenen, N., Metzger, N., Hahn, C., Flemisch, T., M{\'{e}}ndez, J.,
  Dimov, D., Finkbeiner, B., Dachselt, R.: Visual analysis of hyperproperties
  for understanding model checking results. {IEEE} Trans. Vis. Comput. Graph.
  \textbf{28}(1),  357--367 (2022)

\bibitem{hsu2020bounded}
Hsu, T., S{\'{a}}nchez, C., Bonakdarpour, B.: Bounded model checking for
  hyperproperties. In: {TACAS} 2021. \doi{10.1007/978-3-030-72016-2\_6}

\bibitem{imms-sat18}
Ignatiev, A., Morgado, A., Marques{-}Silva, J.: {PySAT:} {A} {Python} toolkit
  for prototyping with {SAT} oracles. In: SAT. pp. 428--437 (2018)

\bibitem{DBLP:journals/acj/JeeJCKYPS10}
Jee, E., Jeon, S., Cha, S.D., Koh, K.Y., Yoo, J., Park, G., Seong, P.:
  Fbdverifier: Interactive and visual analysis of counterexample in formal
  verification of function block diagram. J. Res. Pract. Inf. Technol.
  \textbf{42}(3),  171--188 (2010)

\bibitem{Kocher2018spectre}
Kocher, P., Horn, J., Fogh, A., , Genkin, D., Gruss, D., Haas, W., Hamburg, M.,
  Lipp, M., Mangard, S., Prescher, T., Schwarz, M., Yarom, Y.: Spectre attacks:
  Exploiting speculative execution. In: SP 2019. \doi{10.1109/SP.2019.00002}

\bibitem{krebs2018team}
Krebs, A., Meier, A., Virtema, J., Zimmermann, M.: Team semantics for the
  specification and verification of hyperproperties. In: {MFCS} 2018

\bibitem{39e69c70fe784b7c99c80df46053a0f4}
Lahtinen, J., Launiainen, T., Heljanko, K., {Ropponen}, J.: Model checking
  methodology for large systems, faults and asynchronous behaviour: SARANA 2011
  work report. No.~12 in VTT Tech., VTT Tech. Research Centre of Finland (2012)

\bibitem{DBLP:journals/sttt/LarsenPY97}
Larsen, K.G., Pettersson, P., Yi, W.: {UPPAAL} in a nutshell. Int. J. Softw.
  Tools Technol. Transf.  \textbf{1}(1-2) (1997),
  \url{https://doi.org/10.1007/s100090050010}

\bibitem{Leitner-FischerL13a}
Leitner{-}Fischer, F., Leue, S.: Causality checking for complex system models.
  In: {VMCAI} 2013. \doi{10.1007/978-3-642-35873-9\_16}

\bibitem{Leitner-FischerL13b}
Leitner{-}Fischer, F., Leue, S.: Probabilistic fault tree synthesis using
  causality computation. Int. J. Crit. Comput. Based Syst.  \textbf{4}(2),
  119--143 (2013)

\bibitem{Lewis1973}
Lewis, D.: Causation. Journal of Philosophy  \textbf{70}(17),  556--567 (1973)

\bibitem{Lipp2018meltdown}
Lipp, M., Schwarz, M., Gruss, D., Prescher, T., Haas, W., Horn, J., Mangard,
  S., Kocher, P., Genkin, D., Yarom, Y., Hamburg, M., Strackx, R.: Meltdown:
  reading kernel memory from user space. Commun. {ACM}  \textbf{63}(6),  46--56
  (2020)

\bibitem{mascle2019keys}
Mascle, C., Zimmermann, M.: The keys to decidable hyperltl satisfiability:
  Small models or very simple formulas. In: {CSL} 2020

\bibitem{GNI}
McCullough, D.: Noninterference and the composability of security properties.
  In: Proceedings. 1988 IEEE Symposium on Security and Privacy. pp. 177--186
  (1988)

\bibitem{moore1956gedanken}
Moore, E.F.: Gedanken-experiments on sequential machines. Aut. stud.
  \textbf{34} (1956)

\bibitem{DBLP:conf/indin/PakonenBV18}
Pakonen, A., Buzhinsky, I., Vyatkin, V.: Counterexample visualization and
  explanation for function block diagrams. In: {INDIN} 2018

\bibitem{DBLP:conf/focs/Pnueli77}
Pnueli, A.: The temporal logic of programs. In: {FOCS} 1977

\bibitem{DBLP:conf/tacas/SchuppanB05}
Schuppan, V., Biere, A.: Shortest counterexamples for symbolic model checking
  of {LTL} with past. In: {TACAS} 2005. \doi{10.1007/978-3-540-31980-1\_32}

\bibitem{sorensson2010minisat}
S{\"o}rensson, N.: Minisat 2.2 and minisat++ 1.1. SAT Race 2010

\bibitem{stucki2019gray}
Stucki, S., S{\'{a}}nchez, C., Schneider, G., Bonakdarpour, B.: Gray-box
  monitoring of hyperproperties. In: {FM} 2019

\bibitem{AlternatingAutomata}
Vardi, M.Y.: Alternating automata: Unifying truth and validity checking for
  temporal logics. In: CADE-14 (1997). \doi{10.1007/3-540-63104-6\_19}

\bibitem{py-aiger}
Vazquez-C., M., Rabe, M.: {py-aiger},
  \url{https://github.com/mvcisback/py-aiger}

\bibitem{DBLP:conf/atva/WangYIG06}
Wang, C., Yang, Z., Ivancic, F., Gupta, A.: Whodunit? causal analysis for
  counterexamples. In: {ATVA} 2006. \doi{10.1007/11901914\_9}

\end{thebibliography}

\appendix

\section{Appendix}
\label{app:additional:definitions}
\subsection{Counterfactual Automaton for Running Example}\label{app:counterfactual_automaton}

The counterfactual automaton $T^C_{t_2}$ for system T (see Fig.~\ref{fig:running_example_system}) and the trace $t_2 = \{\mathit{hi}\} \{\mathit{hi},\mathit{ho}\} \{\mathit{ho},\mathit{lo}\}^\omega$ from the running example (see Sec.~\ref{sec:running_ex}) is illustrated in Fig.~\ref{fig:counterfactual_automaton}. For readability we only pictured the reachable fragment of the automaton, the states $(s_1,0),(s_2,0)$ and $(s_3,0)$ are missing. Note that following $t_2$ can be done in $T^C_{t_2}$ without the contingency variables, e.g., $\mathit{lo}^C$. If we diverge from the input sequence of $t_2$, however, such as when taking an edge labeled with $\lnot \mathit{hi}$ in the first step, then we can use the contingency variables to set the corresponding value as in $t_2$. For instance, the edge labeled with $\lnot \mathit{hi} \land \mathit{lo}^C \land \lnot \mathit{ho}^C$ goes to $(s_0,1)$ because we remove $\mathit{lo}$ from the outputs of the successor state. We then proceed as in the state with the corresponding label, i.e., $s_0$. Automata copies in the chain that correspond to the prefix of the trace cannot be visited twice, for instance in $T^C_{t_2}$ we are forced to loop in the third copy and can visit the first two only once.

\begin{figure}
    \centering
    \hspace*{-1cm}
\begin{tikzpicture}
			[->,shorten >=1pt,auto,node distance=3.5cm and 3cm, on grid,initial text=,
			every state/.style={minimum size=45pt,inner sep=0pt},
			every node/.style={font=\small}]
			
			\node(s0) [state, initial above] {\begin{tabular}{c}
					$s_0,0$\\ $\emptyset$
				\end{tabular}};
			\begin{comment}
			\node(s1) [state, below left of=s0] {\begin{tabular}{c}
					$s_1.0$\\ $\{\mathit{ho}\}$
			\end{tabular}};
		    \node(s2) [state, right of=s1] {\begin{tabular}{c}
		    $s_2.0$\\ $\{\mathit{lo}\}$
	        \end{tabular}};
            \node(s3) [state, right of=s2] {\begin{tabular}{c}
            $s_3,0$\\ $\{\mathit{ho},\mathit{lo}\}$
            \end{tabular}};
			
			\end{comment}
            
            \node(s01) [state, below left of=s0,xshift=-7.5em,yshift=-6em] {\begin{tabular}{c}
					$s_0,1$\\ $\emptyset$
				\end{tabular}};
			\node(s11) [state, right of=s01] {\begin{tabular}{c}
					$s_1.1$\\ $\{\mathit{ho}\}$
			\end{tabular}};
		    \node(s21) [state, right of=s11] {\begin{tabular}{c}
		    $s_2.1$\\ $\{\mathit{lo}\}$
	        \end{tabular}};
            \node(s31) [state, right of=s21] {\begin{tabular}{c}
            $s_3,1$\\ $\{\mathit{ho},\mathit{lo}\}$
            \end{tabular}};
            
            \node(s02) [state, below of=s01,yshift=-10em] {\begin{tabular}{c}
					$s_0,2$\\ $\emptyset$
				\end{tabular}};
			\node(s12) [state, right of=s02] {\begin{tabular}{c}
					$s_1.2$\\ $\{\mathit{ho}\}$
			\end{tabular}};
		    \node(s22) [state, right of=s12] {\begin{tabular}{c}
		    $s_2.2$\\ $\{\mathit{lo}\}$
	        \end{tabular}};
            \node(s32) [state, right of=s22] {\begin{tabular}{c}
            $s_3,2$\\ $\{\mathit{ho},\mathit{lo}\}$
            \end{tabular}};
			
			\path[->]
			(s0)	edge[] node[swap]{$\mathit{hi}$}  (s11)
			(s0)	edge[bend right] node[swap,pos=0.25]{$\lnot \mathit{hi} \land \mathit{lo}^C \land \lnot \mathit{ho}^C$}  (s01)
			(s0)	edge[bend left] node[pos=0.25]{$\lnot \mathit{hi} \land \mathit{ho}^C$}  (s31)
			(s0)	edge[] node[]{$\lnot \mathit{hi} \land \lnot \mathit{lo}^C \land \lnot \mathit{ho}^C$}  (s21)
			(s31)	edge[] node[]{$\top$}  (s32)
			(s11)	edge[] node[]{$\top$}  (s32)
			(s21)	edge node[pos=0.1,swap,xshift=0.5em]{$\mathit{hi} \land \lnot \mathit{lo}^C$}  (s12)
			(s21)	edge node[pos=0.1]{$\lnot \mathit{hi} \lor \mathit{lo}^C$}  (s32)
			(s12)	edge[bend right] node[swap]{$\top$} (s32)
			(s32)	edge[loop below] node[]{$\top$} (s32)
			(s22)	edge[] node[]{$\lnot \mathit{hi} \lor \mathit{lo}^C$} (s32)
			(s22)	edge[] node[swap]{$\mathit{hi} \land \lnot \mathit{lo}^C$} (s12)
			(s02)	edge[bend right=45] node[swap]{$\mathit{ho}^C \land \mathit{lo}^C$} (s32)
			(s02)	edge[bend right] node[swap]{$\lnot \mathit{hi} \land \lnot \mathit{ho}^C \land \lnot \mathit{lo}^C$} (s22)
			(s02)	edge[bend left] node[yshift=0.25em,xshift=-4em]{$\mathit{hi} \land \lnot \mathit{ho}^C \land \lnot \mathit{lo}^C$} (s12)
			(s01)	edge[] node[pos=0.2]{$\mathit{ho}^C \land \mathit{lo}^C$} (s32)
			(s01)	edge[] node[swap]{$\lnot \mathit{hi} \land \lnot \mathit{ho}^C \land \lnot \mathit{lo}^C$} (s22)
			(s01)	edge[bend right] node[swap]{$\mathit{hi} \land \lnot \mathit{ho}^C \land \lnot \mathit{lo}^C$} (s12);
\end{tikzpicture}
    \caption{The counterfactual automaton $T^C_{t_2}$ described in App.~\ref{app:counterfactual_automaton}.}
    \label{fig:counterfactual_automaton}
\end{figure}
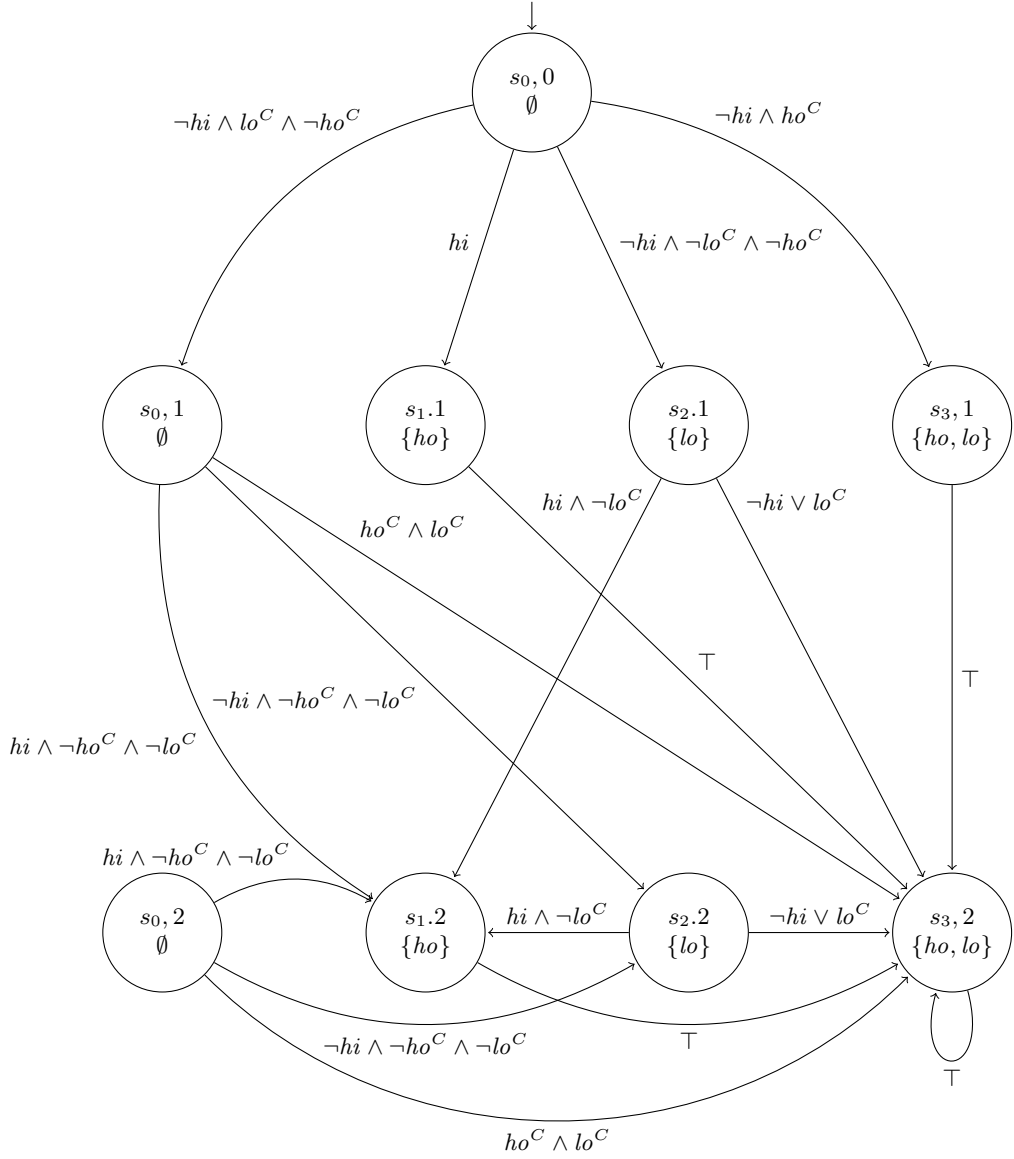

\subsection{Alternating Automata}\label{prelim:alt} 
Alternating automata \cite{AlternatingAutomata} are automata over infinite words that generalize non-deterministic and universal automata.
An alternating automaton $\mathcal{A}$ is defined by the following grammar $$\epsilon_\mathcal{A} \mid \langle \nu, \mathcal{A}, f\rangle \mid \mathcal{A} \wedge \mathcal{A} \mid \mathcal{A} \vee \mathcal{A}$$ where $\epsilon_\mathcal{A}$ is the empty automaton;  $\mathcal{A} \wedge \mathcal{A}$ and $\mathcal{A} \vee \mathcal{A}$ are a disjunction and conjunction of two automata, respectively; and 
$n = \langle \nu, \mathcal{A}, f\rangle$ is a node, such that 
$\nu$ is a state formula that labels $n$, $\mathcal{A}$  is the next state (automaton), and $f$ indicates whether $n$ is accepting or rejecting (\emph{acc}/\emph{rec}).
Since runs of an alternating automaton are defined using conjunctions, they form a \emph{run tree} (see~\cite{AlternatingAutomata} for a formal definition), where disjunctions express non-determinism. 
The set of words accepted (using Büchi acceptance condition) by an alternating automaton are those who have a run such that all of its branches in the tree visit infinitely often in an accepting state. The language $\mathcal{L}(\mathcal{A})$ of an automaton $\mathcal{A}$ is the set of words accepted by $\mathcal{A}$. 

The set of nodes of an alternating automaton $\mathcal{A}$ is denoted by $\mathcal{N}(\mathcal{A})$, where $\mathcal{N}(\epsilon_{\mathcal{A}}) = \emptyset$, $\mathcal{N}(\langle \nu, \mathcal{A}, f\rangle) = \mathcal{N}(\mathcal{A})$, and $\mathcal{N}(\mathcal{A} \wedge\mathcal{A}')= \mathcal{N}(\mathcal{A} \vee \mathcal{A}')= \mathcal{N}(\mathcal{A}) \cup \mathcal{N}(\mathcal{A}')$.
For an LTL formula $\varphi$ there is a linear translation to an  
alternating automaton $\mathcal{A_\varphi}$, s.t. $\mathcal{L}(\mathcal{A_\varphi})$ is the set of traces that satisfy $\varphi$~\cite{AlternatingAutomata}.
In this construction, $\mathcal{N}(\mathcal{A})$ is the set of subformulas of $\varphi$ and their negations, $\mathcal{A}$ is $\varphi$, and all formulas of the form $\neg (\varphi_1 \LTLuntil \varphi_2)$ are \emph{accepting}.

\subsection{LTL to Alternating Automata}\label{app:ltltoalt}
For LTL formulas $\varphi$ and $\varphi'$ and $a \in AP$
\begin{align*}
    \alt(a) &= \langle a, \epsilon_\alt, acc \rangle\\
    \alt(\neg a) &= \langle \neg a, \epsilon_\alt, acc \rangle\\
    \alt(\varphi \wedge \varphi')  &= \alt(\varphi) \wedge \alt(\varphi')\\
    \alt(\varphi \vee \varphi')  &= \alt(\varphi) \vee \alt(\varphi')\\
    \alt(\LTLnext \varphi) &= \langle \mathit{true}, \alt(\varphi), \mathit{rej}\rangle\\
    \alt(\LTLglobally \varphi) &= \langle \mathit{true}, \alt(\LTLglobally \varphi), \mathit{acc}\rangle \wedge \alt(\varphi)\\
    \alt(\LTLeventually \varphi) &= \langle \mathit{true}, \alt(\LTLeventually \varphi), \mathit{rej}\rangle \wedge \alt(\varphi)\\
    \alt(\varphi \LTLuntil \varphi') &=\alt(\varphi') \vee (\langle \mathit{true}, \alt(\varphi \LTLuntil \varphi'), \mathit{rej}\rangle \wedge \alt(\varphi))\\
    \alt(\varphi \LTLrelease \varphi') &= (\alt(\varphi)\wedge \alt(\varphi')) \vee (\langle \mathit{true}, \alt(\varphi \LTLrelease \varphi'), \mathit{acc}\rangle \wedge \alt(\varphi'))
\end{align*}

\subsection{Cause Candidates for Asymmetric Arbiter}
\label{app:candidates}
In the following, we provide the cause candidates $\candidates$, which are given as an intermediate output of our implementation for both asymmetric arbiter instances.
Especially in the \texttt{Asymmetric arbiter} instance, that $tb\_secret$ does not hold on timestep $3$ on trace $t_1$, but does hold on timestep $3$ on trace $t_2$ immediately catches the eye. The events are given as a tuple of the valuation of an input and its timestep, sorted by their trace belonging.

\lstset{%
  breaklines=true,
  breakatwhitespace=true,
}
\noindent
$\candidates$ for \texttt{Asymmetric arbiter}:
\begin{lstlisting}
T1: [(!req_1, 6), (!req_0, 1), (req_1, 0), (req_0, 0), (req_0, 5), (req_1, 4), (!req_1, 2), (!req_0, 2), (!tb_secret, 3), (req_0, 3), (req_1', 3)]
T2: [(!req_0, 4), (!req_0, 1), (req_1, 0), (req_0, 0), (!req_0, 6), (!req_1, 6), (req_0, 5), (req_1, 5), (req_0, 3), (tb_secret, 3), (req_1, 3), (!req_1, 2), (!req_0, 2)]
\end{lstlisting}

\noindent
$\candidates$ for \texttt{Asymmetric arbiter '19}:
\begin{lstlisting}
T1: [(!req0, 1), (req0, 0), (req1, 0), (!req0, 2), (!req1, 2)]
T2: [(!req0', 1), (req0, 0), (req1, 0), (!req0, 2), (!req1, 2)]
\end{lstlisting}

\subsection{Specifications for Experimental Results}
\label{app:specs}
We used the following HyperLTL specifications given in MCHyper syntax on the respective benchmarks.
\lstset{%
  breaklines=true,
  breakatwhitespace=true,
}

\noindent
\texttt{Running Example}:
\begin{lstlisting}
Forall (Forall (G (Eq (AP \"lo\" 0) (AP \"lo\" 1))))
\end{lstlisting}
\texttt{Security in \& out}:
\begin{lstlisting}
Forall (Forall (Implies (G (Eq (AP \"li\" 0) (AP \"li\" 1))) (G (Eq (AP \"lo\" 0) (AP \"lo\" 1)))))
\end{lstlisting}
\texttt{Drone example 1}:
\begin{lstlisting}
Forall (Forall (X (G (Implies (Eq (AP \"bound\" 0) (AP \"bound\" 1)) (X (Eq (AP \"emergency\" 0) (AP \"emergency\" 1)))))))
\end{lstlisting}
\texttt{Drone example 2}:
\begin{lstlisting}
Forall (Forall (Or (G (Eq (AP \"increase\" 0) (AP \"increase\" 1))) (Until (Eq (AP \"increase\" 0) (AP \"increase\" 1)) (And (Eq (AP \"increase\" 0) (AP \"increase\" 1)) (Neq (AP \"up\" 0) (AP \"up\" 1))))))
\end{lstlisting}
\texttt{Asymmetric arbiter '19}:
\begin{lstlisting}
Forall (Forall (Implies (G (And (Eq (AP \"req0\" 0) (AP \"req1\" 1)) (Eq (AP \"req1\" 0) (AP \"req0\" 1)))) (G (And (Eq (AP \"grant0\" 0) (AP \"grant1\" 1)) (Eq (AP \"grant1\" 0) (AP \"grant0\" 1))))))
\end{lstlisting}
\texttt{Asymmetric arbiter}:
\begin{lstlisting}
Forall (Forall (Implies (G (And (Eq (AP \"req_0\" 0) (AP \"req_0\" 1)) (Eq (AP \"req_1\" 0) (AP \"req_1\" 1)))) (G (And (Eq (AP \"grant_0\" 0) (AP \"grant_0\" 1)) (Eq (AP \"grant_1\" 0) (AP \"grant_1\" 1))))))
\end{lstlisting}

\end{document}